\documentclass[11pt,reqno,a4paper,oneside]{amsart}	

\usepackage{amsmath, amssymb, amsthm, amstext}		
\usepackage{txfonts}								
\usepackage[mathletters]{ucs}
\usepackage[utf8x]{inputenc}
\usepackage[english]{babel}
\usepackage{harvard}								
\usepackage[T1]{fontenc}
\usepackage{ae,aecompl}
\usepackage{enumerate}

\usepackage{appendix}
\usepackage{lmodern}
\usepackage{epsfig,color}
\usepackage[arrow, matrix, curve]{xy} 				
\usepackage{tensor} 								
\usepackage{cancel}								

\usepackage{color}								
\usepackage{mathrsfs}
\usepackage{bbm}
\usepackage{comsym}
\usepackage{graphicx}

\usepackage{babelbib}							

\numberwithin{equation}{section}

\newtheorem{lem}{Lemma}[section]
\newtheorem{prop}[lem]{Proposition} 
\newtheorem{thm}[lem]{Theorem}
\newtheorem{cor}[lem]{Corollary}
\newtheorem{conj}[lem]{Conjecture} 
\newtheorem{summ}[lem]{Summary}

\theoremstyle{definition}

\newtheorem{ex}[lem]{Example}

\newtheorem*{thm*}{Theorem}

\usepackage[textwidth=5in,textheight=8.25in,paperheight=11in]{geometry}
\usepackage{setspace}
\allowdisplaybreaks[1]  
\sloppy										

\usepackage{hyperref}
\hypersetup{%
  pdftitle = {Twistors and GR},
  pdfauthor = {Norman Metzner},
  bookmarksnumbered=true,
  bookmarksdepth =1,
  pdfdisplaydoctitle=true,
  linkbordercolor={0 0 1}
}

\pagestyle{headings}

\input{headings}

\title[Twistor Theory of Higher-Dimensional Black Holes II]{Twistor Construction of Higher-Dimensional Black Holes --- Part~II: Examples}

\author{Paul Tod$\mbox{}^{\dagger}$}
\thanks{$\mbox{}^{\dagger}$Mathematical Institute, University of Oxford, 24-29 St\,Giles', OX1 3LB Oxford, UK, and St\,John's College, OX1 3JP, Oxford, UK}

\author{Norman Metzner$\mbox{}^{\ddagger}$}
\thanks{$\mbox{}^{\ddagger}$Mathematical Institute, University of Oxford, 24-29 St\,Giles', OX1 3LB Oxford, UK, and St\,John's College, OX1 3JP, Oxford, UK}

\author{Lionel Mason$\mbox{}^{+}$}
\thanks{$\mbox{}^{+}$Mathematical Institute, University of Oxford, 24-29 St\,Giles', OX1 3LB Oxford, UK, and St\,Peter's College, New Inn Hall Street, OX1 2DL, Oxford, UK}

\setcounter{tocdepth}{1}

\begin{document}


\addtolength{\jot}{0.1cm}			

\begin{abstract}
We apply the twistor construction for higher-dimensional black holes to known examples in five space-time dimensions. First the patching matrices are calculated from the explicit metric for these examples. Then an ansatz 
is proposed for obtaining the patching matrix instead from the data of rod structure and angular momenta. The ansatz is tested on examples with up to three nuts, and these are shown 
to give flat space, the Myers-Perry solution and the black ring, as expected. Rules for the transition between different adaptations of the patching matrix and for the elimination of conical singularities are 
developed and seen to work.
\end{abstract}

\maketitle   
\tableofcontents

\section{Introduction}
This paper follows Part~I in which basic results were established and we shall freely use those 
results and the terminology established there. Thus in five 
dimensions the twistor construction for black hole space-times can be summarized as follows.

\begin{summ} \label{thm:summary}
There exists a one-to-one correspondence between five-dimensional stationary and axisymmetric space-times and rank-3 bundles $E→\mathcal{R}$ over reduced twistor space $\mathcal{R}$, where $\mathcal{R}$ consists of two Riemann spheres identified over a certain region. $E$ can be encoded in a symmetric meromorphic $3\times 3$ matrix $P(z)$ as follows.

If $J$ is the matrix of inner products of Killing vectors, we define the Ernst potential adapted to a particular rod as the matrix
\begin{equation*} 
\renewcommand{\arraystretch}{1.4}
J'=\frac{1}{\det \skew{7}{\tilde}{A}} \left(\begin{array}{cc}\hphantom{-}1 & -χ^{\mathrm{t}} \\-χ & \det \skew{7}{\tilde}{A} \cdot \skew{7}{\tilde}{A} + χχ^{\mathrm{t}}\end{array}\right),
\end{equation*}
where $\skew{7}{\tilde}{A}$ is obtained from $J$ by cancelling the appropriate row and column, and $χ=(χ_{1},χ_{2})$ are the twist potentials.

The bundle $E→\mathcal{R}$ is characterized by the \emph{twistor data} which for an axis-regular Ernst potential consists only of the patching matrix $P$. The patching matrix is an analytic continuation of the Ernst potential, that is $P(z)=J'(0,z)$ where $J'$ is non-singular for $r→0$. 
\end{summ}

Note that the axis-regularity fixes the three integers which are initially part of the twistor data. For the bundle corresponding to $J$ itself the integers are $p_{0}=1$, $p_{1}=p_{2}=0$ and for the bundle corresponding to the Ernst potential $J'$ they are $p_{0}=p_{1}=p_{2}=0$. 

Moreover, we have seen in Part~I that the rod structure of the space-time is coded into the poles and residues of the patching matrix.
\begin{prop}
A patching matrix $P$ has real singularities, that is points $z\in \mathbb{R}$ where an entry of $P$ has a singularity, at most at the nuts of the rod structure and these real singularities are simple poles of $P$.
\end{prop}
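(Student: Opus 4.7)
The plan is to exploit the explicit formula for $J'$ together with the known local behaviour of $J$ near a rod and a nut. First I would note that $P(z)$ is automatically regular on the open interior of the rod to which $J'$ is adapted: this is just the content of the Summary, namely that $J'$ is non-singular as $r\to 0$ on its adapted rod, so $P(z)=J'(0,z)$ is holomorphic there. Any real singularities of $P$ must therefore arise on neighbouring rods or at nuts.

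Next I would identify where $J'$ can fail to be regular. From
\[
J'=\frac{1}{\det \tilde{A}}\begin{pmatrix}1 & -\chi^{\mathrm{t}} \\ -\chi & \det \tilde{A}\cdot \tilde{A}+\chi\chi^{\mathrm{t}}\end{pmatrix},
\]
every singularity must come from a zero of $\det \tilde{A}$. By construction $\det \tilde{A}$ is non-zero on the interior of the adapted rod; on a neighbouring rod the ``pre-continuation'' $\det \tilde{A}(0,z)$ vanishes identically because the wrong row and column of $J$ have been removed. The key input from the twistor side is that the analytic continuation $P(z)$ picks a branch on which $\det \tilde{A}$ is a holomorphic function of $z$ that is non-vanishing on the open interiors of neighbouring rods. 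The flat-space model $\det \tilde{A}\propto -z+\sqrt{z^2+r^2}$, which analytically continues through $r=0$ to $-2z$, displays the mechanism: only at a nut does the analytically continued $\det \tilde{A}$ develop a zero. This gives the first assertion.

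For the simple-pole statement the argument is local near a nut $z_{0}$. A local expansion of the metric near the nut (where two rods meet, with two different Killing vectors vanishing on the two sides) should show, in analogy with the flat-space model above, that the analytically continued $\det \tilde{A}(0,z)$ vanishes to exactly first order at $z_{0}$. Substituting into the formula and rewriting the bottom-right block as $\tilde{A}+\chi\chi^{\mathrm{t}}/\det \tilde{A}$, the $\tilde{A}$ term is regular at the nut and the remaining $1/\det \tilde{A}$ factors give at worst simple poles in every entry of $P$.

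The main obstacle is to verify this first-order vanishing in full generality, beyond the flat-space model. This requires a careful local expansion of the five-dimensional Weyl--Papapetrou form of the metric near a nut together with control of the twist potentials; it is this geometric input that underpins the simple-pole conclusion.
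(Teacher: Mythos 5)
Your plan identifies the right mechanism --- the only dangerous factor in $J'$ is $1/\det\tilde{A}$, and everything hinges on how the analytic continuation of $\det\tilde{A}(0,z)$ behaves off the adapted rod --- and this is indeed the spirit of the paper's treatment. (Note that the paper does not reprove the proposition here: it is imported from Part~I, Corollary~6.3 and Proposition~6.4; the supporting local computation does however appear in this paper as Theorem~\ref{thm:Jaroundnut} and the explicit displays \eqref{eq:Pplus}, \eqref{eq:Pminus}, which exhibit the simple pole in $(u,v)$-coordinates.) As written, though, your argument has a genuine gap, and you flag it yourself: the two load-bearing claims --- that the continued $\det\tilde{A}$ is non-vanishing on the open interiors of the neighbouring rods, and that it vanishes to exactly first order at each nut --- are justified only by the flat-space model. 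The first claim is essentially the first half of the proposition restated, so appealing to ``the twistor side picking a good branch'' is circular unless you supply the bundle-theoretic input from Part~I. The second is precisely what the local analysis provides: writing $r=uv$, $z-z_0=\tfrac12(v^2-u^2)$ and demanding regularity of the metric in Cartesian coordinates forces $U=u^2U_0$, $W=v^2W_0$ with $U_0$, $W_0$ bounded and non-zero, whence $\det\tilde{A}$ vanishes like $u^2\sim 2(z-z_0)$, i.e.\ to first order in $z$. You need to carry out (or cite) that computation rather than extrapolate from flat space.

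Two further omissions. First, singularities of $P$ could a priori also come from the twist potentials $\chi$ rather than from $\det\tilde{A}$; the paper's switching analysis shows that near a spacelike nut $\chi_i=\chi_i^{0}+\chi_i^{1}u^{2}+\dots$ after a constant gauge shift $P\to APB$, so $\chi$ is regular there, but this requires an argument and a gauge choice, not just the formula for $J'$. Second, the proposition carries the hypothesis that horizons are nondegenerate, which you never invoke; it is exactly what guarantees $U_0,W_0\neq 0$ in the local expansion, equivalently that the zero of $\det\tilde{A}$ at the nut is simple rather than of higher order, and for a degenerate horizon the simple-pole conclusion fails.
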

This proposition requires horizons to be nondegenerate. From the definition of $J'$ it is clear that there are different patching matrices adapted to different rods. 
In Section~\ref{sec:Pexamples} we calculate the patching matrix on the top-end rod for the Myers-Perry solution  
and the black ring solution; for flat space and the five-dimensional Schwarzschild space-time the patching matrix is easily computed on all rods.

The converse direction will be studied in Section~\ref{sec:converse}, that is we present an ansatz for constructing the patching matrix given the data of angular momenta and the rod structure. 
By Summary~\ref{thm:summary}, knowing $P$ is equivalent to determining the space-time metric. Not all rod structures lead to solutions. 
In order to determine the genuinely free parameters in $P$, it is necessary to understand how $P$ behaves when 
changing from the adaptation to one rod to the one for an adjacent rod, a process we call switching. In Theorem~6.5 of Part ~I we have seen how to switch around the nut at infinity and 
in Section~\ref{sec:converse} we study this for an arbitrary nut. Finally, we show how to eliminate conical singularities and apply this to the black ring.
\section{Patching Matrix for Relevant Examples} \label{sec:Pexamples}
In order to find the patching matrix we first need to know the metric of our space-time in the ${\sigma}$-model form, that is we have to calculate $J(r,z)$. After that the Ernst potential, 
respectively the patching matrix, can be computed which mainly means determining the twist potentials on the axis. The easiest example to start with is flat space in five dimensions. 

\subsection{Five-Dimensional Minkowski Space} \label{subsec:exMinkrodstr}
The real five-dimensional Minkowski space is the manifold $\mathbb{R}^{5}$ with the metric that, in double polar coordinates, takes the form
\begin{equation*}
\drm s^{2} = -\drm x_{0}^{2}+\drm R_{1}^{2} + R_{1}^{2} \,\drm {\varphi}^{2}+\drm R_{2}^{2} + R_{2}^{2} \,\drm {\psi}^{2}.
\end{equation*}
The rotational Killing vector fields are $X_{1}=∂_{\varphi}$, $X_{2}=∂_{\psi}$. To obtain the ${\sigma}$-model coordinates we introduce $z$, $r$ by 
\begin{equation*}
z+\irm r = \frac{1}{2} \left(R_{1}+\irm R_{2}\right)^{2}. 
\end{equation*}
Then
\begin{align} \label{eq:JMink}
J(r,z) & =
\renewcommand\arraystretch{1.5}
\left(\begin{array}{ccc}-1 & 0 & 0 \\
\hphantom{-}0 & z+\sqrt{r^{2}+z^{2}} & 0 \\
\hphantom{-}0 & 0 & -z+\sqrt{r^{2}+z^{2}}
\end{array}\right)\\
\text{and} \quad {\erm}^{2\nu} & = \frac{1}{2\sqrt{r^{2}+z^{2}}}. 
\end{align}
Since $\dim(\ker J(0,z))>1$ only for $z=0$, we can read off that the metric admits two semi-infinite rods, namely $(-\infty,0)$ and $(0,\infty)$. Because $J$ is diagonal we have for the 
Killing 1-forms $\theta_{I}=g(X_{I},⋅)$ that ${\theta}_{I}\wedge \drm {\theta}_{I}=0$. It follows that the twist potentials are zero without loss of generality and thereby we obtain the patching matrix as 
\begin{equation} \label{eq:PMink}
P_{±}(z)=\mathrm{diag}\, \left(\mp \frac{1}{2z},-1,\pm  2z\right),
\end{equation}
where the upper sign combination is for $P$ adapted to $z>0$, and the lower one for $z<0$.

\subsection{Twist Potentials on the Axis}

As part of the algorithm for obtaining $P(z)$ from the metric we need to calculate the twist potentials just on the axis. Explicit expressions for twist potentials have been 
obtained for example in \cite{Tomizawa:2004aa} and \cite{Tomizawa:2009aa}, but these are not in Weyl coordinates which we need here. 
Therefore it is simpler to rederive some results, not only for completeness but also for providing a way of calculating the twist potentials on the axis for other space-times where they are not yet in the literature.

First we derive general formulae. Assume that the metric takes the form
\begin{align*}
\drm s^{2} & = J_{00} \,\drm t^{2} + 2 J_{01}\, \drm t \drm {\varphi} + 2 J_{02}\, \drm t \drm {\psi} + J_{11}\, \drm {\varphi} ^{2} + 2 J_{12}\,\drm {\varphi} \drm {\psi} \\
& \hspace{0.4cm} + J_{22}\, \drm {\psi}^{2} + \erm^{2{\nu}} \left(\drm r^{2}+\drm z^{2}\right),
\end{align*}
and rewrite it as 
\begin{align*}
\drm s^{2} & = -F^{2} \left(\drm t + {\omega}_{1}\, \drm {\varphi} + {\omega}_{2}\,\drm {\psi} \right)^{2} + G^{2} \left(\drm {\psi}+ {\Omega} \, \drm {\varphi}\right)^{2} \\ 
& \hspace{0.4cm} + H^{2}\, \drm {\varphi}^{2} + \erm^{2{\nu}} \left(\drm r^{2}+\drm z^{2}\right),
\end{align*}
with
\begin{align*}
& F^{2} = - J_{00}, \quad  -F^{2} {\omega}_{1} = J_{01}, \quad -F^{2} {\omega}_{2} = J_{02}, \\
- & F^{2}{\omega}_{2}^{2}+G^{2} = J_{22}^{\vphantom{1}}, \quad  -F^{2} {\omega}_{1} {\omega}_{2}+G^{2}{\Omega} = J_{12}, \\
- & F^{2}{\omega}_{1}^{2}+G^{2}{\Omega}^{2}+H^{2} = J_{11}^{\vphantom{1}}.
\end{align*}
The latter form has been chosen to facilitate calculating $P$ adapted to part of the axis where $z\to \infty $ and $\partial_{\varphi}=0$. In terms of the orthonormal frame
\begin{align*}
{\theta}^{0} & = F \left(\drm t + {\omega}_{1}\, \drm {\varphi} + {\omega}_{2}\,\drm {\psi} \right), \quad {\theta}^{1} = G \left(\drm {\psi}+ {\Omega} \, \drm {\varphi}\right), \\
{\theta}^{2} & = H \, \drm {\varphi}, \quad {\theta}^{3} = \erm ^{{\nu}} \, \drm r, \quad {\theta}^{4} = \erm ^{{\nu}} \, \drm z,
\end{align*}
the Killing 1-forms take the form
\begin{align*}
\frac{∂}{\partial t} & \to  T = - F\,{\theta}^{0} = - F^{2} \left(\drm t + {\omega}_{1}\, \drm {\varphi} + {\omega}_{2}\,\drm {\psi} \right),\\
\frac{∂}{\partial \psi} & \to  Ψ = G\,{\theta}^{1}-F{\omega}_{2}\,{\theta}^{0}.
\end{align*}
Using $\drm {\varphi} = H^{-1}\, {\theta}^{2}$, $\drm {\psi} = G^{-1}\,{\theta}^{1}-{\Omega}H^{-1}\,{\theta}^{2}$ this yields for the first twist potential
\begin{align*}
\drm {\chi}_{1} & = * \left(T\wedge Ψ\wedge \drm T\right) \\
 & = - \frac{F^{3}G}{H} *\left({\theta}^{0}\wedge {\theta}^{1}\wedge {\theta}^{2}\wedge (\drm {\omega}_{1}-{\Omega} \,\drm {\omega}_{2} )\right),
\end{align*}
and for the second
\begin{align*} 
 \drm {\chi}_{2} & = * \left(T\wedge Ψ\wedge \drm Ψ\right)\\
 & = \frac{FG}{H} *\left({\theta}^{0}\wedge {\theta}^{1}\wedge {\theta}^{2}\wedge (G^{2}\,\drm {\Omega}-F^{2}{\omega}_{2}\, \drm {\omega}_{1}-F^{2}{\omega}_{2}{\Omega} \,\drm {\omega}_{2} )\right).
\end{align*}
Since $J=J(r,z)$ all the functions depend only on $r$, $z$, hence so do ${\chi}_{i}$ and ${\omega}_{i}$. Then the total derivatives are $\drm {\chi}_{i} = \partial _{r} {\chi}_{i}\, \drm r+ \partial _{z} {\chi}_{i}\, \drm z$ and analogous for ${\omega}_{i}$. Furthermore, noting that $\drm r = \erm ^{-{\nu}} \, {\theta}^{3}$, $\drm z = \erm ^{-{\nu}} \, {\theta}^{4}$ the above equations read
\begin{align*}
\drm {\chi}_{1} & = - {\epsilon} \, \frac{F^{3}G}{H} \Big((\partial _{r} {\omega}_{1}-{\Omega} \, \partial _{r} {\omega}_{2})\drm z-(\partial _{z} {\omega}_{1}-{\Omega} \, \partial _{z} {\omega}_{2}) \drm r\Big)\\
\Rightarrow  \partial _{z}{\chi}_{1} & = -{\epsilon} \, \frac{F^{3}G}{H}\Big(\partial _{r} {\omega}_{1}-{\Omega} \, \partial _{r} {\omega}_{2}\Big),
\end{align*}
and
\begin{align*}
\drm {\chi}_{2} & = {\epsilon}\, \frac{FG}{H} \Bigg(\Big(G^{2} \partial _{r} {\Omega}-F^{2}{\omega}_{2} \partial _{r} {\omega}_{1}-F^{2}{\omega}_{2}{\Omega} \partial _{r} {\omega}_{2}\Big) \drm z \\
 & \hphantom{=} - \Big(G^{2} \partial _{r} {\Omega}-F^{2}{\omega}_{2} \partial _{r} {\omega}_{1}-F^{2}{\omega}_{2}{\Omega} \partial _{r} {\omega}_{2}\Big) \drm r \Bigg) \\
\Rightarrow  \partial _{z}{\chi}_{2} & = {\epsilon}\, \frac{FG}{H} \left(G^{2} \partial _{r} {\Omega}-F^{2}{\omega}_{2} \partial _{r} {\omega}_{1}-F^{2}{\omega}_{2}{\Omega} \partial _{r} {\omega}_{2}\right),
\end{align*}
with ${\epsilon}\in \{\pm 1\}$ only depending on the chosen orientation of our orthonormal tetrad. To proceed we need to specify our metric functions in order to calculate the twist potentials. 
First we are going to look at the asymptotics since they will give us important information later. 

\subsection{Asymptotic Minkowski Space-Times} \label{subsec:exasympt}
For an asymptotically-flat stationary and axisymmetric space-time in five dimensions we learn from \cite[Sec. IV.C]{Harmark:2004rm} the leading terms in the approach to Minkowski space. In ${\sigma}$-model 
form $\big($for $\sqrt{r^{2}+z^{2}}\to \infty $ and $z/\sqrt{r^{2}+z^{2}}$ finite$\big)$ the metric coefficients behave as follows 
\begin{equation} \label{eq:asymptmetr}
\begin{split}
J_{00} & = -1 + \frac{4M}{3{\pi}} \frac{1}{\sqrt{r^{2}+z^{2}}}+\mathcal{O}\left((r^{2}+z^{2})^{-1}\right),\\
J_{01} & = - \frac{L_{1}}{{\pi}} \frac{\sqrt{r^{2}+z^{2}}-z}{r^{2}+z^{2}}+\mathcal{O}\left((r^{2}+z^{2})^{-1}\right),\\
J_{02} & = - \frac{L_{2}}{{\pi}} \frac{\sqrt{r^{2}+z^{2}}+z}{r^{2}+z^{2}}+\mathcal{O}\left((r^{2}+z^{2})^{-1}\right), \\ 
J_{11} & =  \left(\sqrt{r^{2}+z^{2}}-z\right)\left[1+\frac{2}{3{\pi}}\frac{M+{\eta}}{\sqrt{r^{2}+z^{2}}}+\mathcal{O}\left((r^{2}+z^{2})^{-1}\right)\right],\\
J_{12} & = {\zeta} \frac{r^{2}}{\left(r^{2}+z^{2}\right)^{\frac{3}{2}}}+\mathcal{O}\left((r^{2}+z^{2})^{-1}\right),\\
J_{22} & = \left(\sqrt{r^{2}+z^{2}}+z\right)\left[1+\frac{2}{3{\pi}}\frac{M-{\eta}}{\sqrt{r^{2}+z^{2}}}+\mathcal{O}\left((r^{2}+z^{2})^{-1}\right)\right],\\
{\erm}^{2\nu} & = \frac{1}{2\sqrt{r^{2}+z^{2}}}+\mathcal{O}\left((r^{2}+z^{2})^{-1}\right).
\end{split}
\end{equation}
Here $M$ is the mass of the space-time and $L_{1}$, $L_{2}$ are the angular momenta; ${\zeta}$ and ${\eta}$ are constant where ${\eta}$ is not gauge-invariant, that is it changes under $z\to z+\text{const.}$, unlike ${\zeta}$; the periodicity of ${\varphi}$ and ${\psi}$ is assumed to be $2{\pi}$ (the case when it is $2{\pi}{\varepsilon}$ is given in \cite[Sec. IV.C]{Harmark:2004rm} as well).

Calculating the twist potentials on the top end rod by the method described above, we obtain to leading order in $z$ the expressions
\begin{equation*}
\left.{\chi}_{1}\right|_{r=0} \sim \frac{2{\epsilon}L_{1}}{πz}, \quad \left.{\chi}_{2}\right|_{r=0} \sim -\frac{4{\epsilon}ζ}{z},
\end{equation*}
where additive constants are dropped, so that both potentials go to zero at large $z$. Thus the patching matrix to leading order in $z$ beyond \eqref{eq:PMink} is
\begin{equation} \label{eq:asymptPtop}
\renewcommand\arraystretch{2.5}
P_{+} = \left(\begin{array}{ccc}-\dfrac{1}{2z} - \dfrac{M+{\eta}}{3{\pi}z^{2}} & \dfrac{{\epsilon}L_{1}}{{\pi}z^{2}} & -\dfrac{2{\epsilon}{\zeta}}{z^{2}} \\\dfrac{{\epsilon}L_{1}}{{\pi}z^{2}} & -1+ \dfrac{4M}{3{\pi}z} & -\dfrac{2L_{2}}{{\pi}z} \\-\dfrac{2{\epsilon}{\zeta}}{z^{2}} & -\dfrac{2L_{2}}{{\pi}z} & 2z + \dfrac{4(M-{\eta})}{3{\pi}}\end{array}\right).
\end{equation}
The subscript $+$ indicates that the patching matrix is adapted to the top asymptotic end. The adaptation $P_{-}$ to the bottom asymptotic end, that is the one which extends to $z\to -\infty $, is obtained by swapping ${\varphi}$ and ${\psi}$ in their roles. This leads to $z\mapsto -z$, $L_{1}\leftrightarrow L_{2}$. Furthermore, one has to check what happens with ${\zeta}$ and ${\eta}$ in this case. From \cite[Eq.~(5.18)]{Harmark:2004rm} we see that ${\zeta}\mapsto {\zeta}$ and ${\eta}\mapsto -{\eta}$ for the Myers-Perry solution. But all asymptotically flat space-times have the same fall off up to the order \eqref{eq:asymptmetr}, so this behaviour must be generic. For the ease of reference later on we will include $P_{-}$ again explicitly
\begin{equation}\label{eq:asymptPbot}
\renewcommand\arraystretch{2.5}
P_{-} = \left(\begin{array}{ccc}\dfrac{1}{2z} - \dfrac{M-{\eta}}{3{\pi}z^{2}} & -\dfrac{{\epsilon}L_{2}}{{\pi}z^{2}} & -\dfrac{2{\epsilon}{\zeta}}{z^{2}} \\-\dfrac{{\epsilon}L_{2}}{{\pi}z^{2}} & -1- \dfrac{4M}{3{\pi}z} & \dfrac{2L_{1}}{{\pi}z} \\-\dfrac{2{\epsilon}{\zeta}}{z^{2}} & \dfrac{2L_{1}}{{\pi}z} & -2z + \dfrac{4(M+{\eta})}{3{\pi}}\end{array}\right).
\end{equation}

The Myers-Perry solution \cite{Myers:1986aa}, which we will study next, is the five-dimensional pendant of the Kerr solutions, that is it describes a five-dimensional spinning black hole with a topologically spherical horizon. 

\subsection{Five-Dimensional Myers-Perry Solution} \label{subsec:PMP}
The calculation in the first part of this example up to the expression for $J(r,z)$ is based on \cite{Harmark:2004rm}. The Myers-Perry metric is given by
\begin{equation} \label{eq:MPmetric}
\begin{split}
\drm s^{2} & = -\drm t^{2} + \frac{{\rho}_{0}^{2}}{{\Sigma}}\left[\drm t - a_{1} \sin^{2}{\theta} \,\drm {\varphi} - a_{2} \cos^{2} {\theta} \,\drm {\psi}\right]^{2} \\
& \hspace{0.4cm} + ({\rho}^{2}+a_{1}^{2})\sin^{2}{\theta} \,\drm {\varphi}^{2} + ({\rho}^{2}+a_{2}^{2})\cos^{2}{\theta} \,\drm {\psi}^{2} \\
& \hspace{0.4cm} + \frac{{\Sigma}}{{\Delta}} \,\drm{\rho}^{2} + {\Sigma} \,\drm {\theta}^{2},
\end{split}
\end{equation}
where
\begin{equation} \label{eq:MPnot}
\begin{split}
{\Delta} & = {\rho}^{2} \left(1+\frac{a_{1}^{2}}{{\rho}^{2}}\right)\left(1+\frac{a_{2}^{2}}{{\rho}^{2}}\right) - {\rho}_{0}^{2}, \\
{\Sigma} & = {\rho}^{2} + a_{1}^{2} \cos^{2}{\theta} + a_{2}^{2} \sin^{2}{\theta},
\end{split}
\end{equation}
and the coordinate ranges are
\begin{equation*}
t\in \mathbb{R},\quad {\varphi},{\psi} \in  \left[0,2{\pi}\right), \quad {\theta} \in  [0,{\pi}].
\end{equation*}
The Weyl coordinates can be taken to be
\begin{equation*}
r = \frac{1}{2} {\rho} \sqrt{{\Delta}} \sin 2{\theta}, \quad z = \frac{1}{2} {\rho}^{2} \left(1+\frac{a_{1}^{2}+a_{2}^{2}-{\rho}_{0}^{2}}{2{\rho}^{2}}\right) \cos 2{\theta}.
\end{equation*}
The rod structure consists of three components $(-\infty ,{\alpha})$, $(-{\alpha},{\alpha})$, $({\alpha},\infty )$, where
\begin{equation*}
{\alpha} = \frac{1}{4}\sqrt{\left({\rho}_{0}^{2}-a_{1}^{2}-a_{2}^{2}\right)^{2}-4a_{1}^{2}a_{2}^{2}}.
\end{equation*}
The rod vectors turn out to be as follows.
\begin{enumerate}[(1)]
\item If $z$ lies in the semi-infinite spacelike rod $({\alpha},\infty )$, the rod vector is $∂_{\varphi}$.
\item If $z$ lies in the finite timelike rod $(-{\alpha},{\alpha})$, the kernel of $J$ is spanned by the vector
\begin{equation*}
\left(\begin{array}{ccc}1 & {\Gamma}_{1} & {\Gamma}_{2}\end{array}\right)^{\mathrm{t}},
\end{equation*}
in the basis $(\partial_t,\partial_\varphi,\partial_\psi)$, where ${\Gamma}_{1,2}$ are the angular velocities
\begin{equation*}
{\Gamma}_{1}= \frac{{\rho}_{0}^{2}+a_{1}^{2}-a_{2}^{2}-4{\alpha}}{2a_{1}^{\vphantom{1}}{\rho}_{0}^{2}}, \quad {\Gamma}_{2}= \frac{{\rho}_{0}^{2}-a_{1}^{2}+a_{2}^{2}-4{\alpha}}{2a_{2}^{\vphantom{1}}{\rho}_{0}^{2}}.
\end{equation*}
This rod corresponds to an event horizon with topology $S^{3}$ (see \cite{Hollands:2008fp}, proof of Proposition 2 in Section 3).
\item If $z$ lies in the semi-infinite spacelike rod $(-\infty ,-{\alpha})$, the rod vector is $∂_{\psi}$.
\end{enumerate}
The conserved Komar quantities are
\begin{equation} \label{eq:MPmassmom}
M = \frac{3{\pi}}{8} {\rho}_{0}^{2}, \quad L_{1}^{\vphantom{1}} = \frac{3{\pi}}{8} a_{1}^{\vphantom{1}} {\rho}_{0}^{2}, \quad L_{2}^{\vphantom{1}} = \frac{3{\pi}}{8} a_{2}^{\vphantom{1}} {\rho}_{0}^{2}.
\end{equation}

Now we can again calculate the twist potentials on the top end rod as shown earlier and obtain
\begin{equation*}
\left.{\chi}_{1}\right|_{{\theta}=0} = -\frac{{\epsilon}{\rho}_{0}^{2}a_{1}^{\vphantom{1}}}{{\rho}^{2}+a_{1}^{2}}, \quad \left.{\chi}_{2}\right|_{{\theta}=0} = \frac{{\epsilon}a_{1}^{\vphantom{1}}a_{2}^{\vphantom{1}}{\rho}_{0}^{2}}{{\rho}^{2}+a_{1}^{2}}.
\end{equation*}
On ${\theta}=0$ we have
\begin{equation*}
{\rho}^{2} = 2z + \frac{1}{2} \left({\rho}_{0}^{2}-a_{1}^{2}-a_{2}^{2}\right),
\end{equation*}
so the notation in the calculation of $P$ can be somewhat streamlined by introducing
\begin{equation*}
{\beta} = \frac{1}{4} \left(-{\rho}_{0}^{2}+a_{1}^{2}-a_{2}^{2}\right),\ {\gamma} = \frac{1}{4} \left({\rho}_{0}^{2}+a_{1}^{2}-a_{2}^{2}\right).
\end{equation*}
Then a straightforward computation shows
\begin{equation} \label{eq:MPpatmat} 
\renewcommand\arraystretch{2.5}
P_{1} = \left(\begin{array}{ccc}-\dfrac{z+{\gamma}}{2(z^{2}-{\alpha}^{2})} & -\dfrac{{\rho}_{0}^{2}a_{1}^{\vphantom{1}}}{4(z^{2}-{\alpha}^{2})} & \dfrac{{\rho}_{0}^{2}a_{1}^{\vphantom{1}}a_{2}^{\vphantom{1}}}{4(z^{2}-{\alpha}^{2})} \\
\cdot  & -\dfrac{z^{2}+z({\beta}-{\gamma})+{\gamma}^{2}-{\beta}{\gamma}-{\alpha}^{2}}{z^{2}-{\alpha}^{2}} & -\dfrac{a_{2}^{\vphantom{1}}{\rho}_{0}^{2}(z-{\gamma})}{2(z^{2}-{\alpha}^{2})} \\
\cdot  & \cdot  & \hspace{-1.2cm} 2(z-{\beta})+\dfrac{a_{2}^{2}{\rho}_{0}^{2}(z-{\gamma})}{2(z^{2}-{\alpha}^{2})}\end{array}\right)_{\vphantom{\frac{1}{2}}},
\end{equation}
where the subscript indicates that it is adapted to rod 1 according to the numbering above.

In the case of $a_{1}=a_{2}=0$ the Myers-Perry metric becomes the 5-dimensional Schwarzschild metric
\begin{align*}
\drm s^{2} & = \left(-1+\frac{{\rho}_{0}^{2}}{{\rho}^{2}}\right)\drm t^{2} + {\rho}^{2}\sin^{2}{\theta} \,\drm {\varphi}^{2} + {\rho}^{2}\cos^{2}{\theta} \,\drm {\psi}^{2} \\
& \hspace{0.4cm} + \left(1-\frac{{\rho}_{0}^{2}}{{\rho}^{2}}\right)^{-1} \,\drm {\rho}^{2} + {\rho}^{2} \,\drm {\theta}^{2},
\end{align*}
The twist potentials are globally constant and we set them without loss of generality to zero. The adaptations to the three different parts of the axis, then take the following form.
\begin{enumerate}[(1)]
\item Spacelike rod $z\in ({\alpha},\infty )$: 
\begin{equation*}
P_{1}(z)=\mathrm{diag}\,\left(-\frac{1}{2(z-\alpha)}, -\frac{z-\alpha}{z+\alpha},2(z+{\alpha})\right).
\end{equation*}
\item Horizon rod $z\in (-{\alpha},{\alpha})$: 
\begin{equation*}
P_{2}(z)=\mathrm{diag}\,\left(-\frac{1}{4(z^{2}-\alpha^{2})}, -2(z-{\alpha}),2(z+{\alpha})\right).
\end{equation*}
\item Spacelike rod $z\in (-\infty ,-{\alpha})$: 
\begin{equation*}
P_{3}(z)=\mathrm{diag}\,\left(\frac{1}{2({z+\alpha})}, -\frac{z+\alpha}{z-{\alpha}},-2(z-{\alpha})\right).
\end{equation*}
\end{enumerate}

\subsection{Black Ring Solutions}
The five-dimensional black ring of Emparan and Reall \cite{Emparan:2002aa} is a space-time with a black hole whose horizon has topology $S^{1}\times S^{2}$. 
We shall take formulae and notation from \cite[Sec.~VI]{Harmark:2004rm}.


The metric is
\begin{equation} \label{eq:BRmetric} 
\begin{split}
\drm s^{2} & = -\frac{F(v)}{F(u)} \left(\drm t - C {\kappa} \frac{1+v}{F(v)}\,\drm {\varphi}\right)^{2} \\
		& \hspace{0.4cm} +\frac{2{\kappa}^{2}F(u)}{(u-v)^{2}}\left[-\frac{G(v)}{F(v)}\,\drm {\varphi}^{2} + \frac{G(u)}{F(u)}\,\drm {\psi}^{2} + \frac{1}{G(u)} \,\drm u^{2}-\frac{1}{G(v)} \,\drm v^{2} \right],
\end{split}
\end{equation}
where $F({\xi})$ and $G({\xi})$ are 
\begin{equation*}
F({\xi})=1+b{\xi},\quad G({\xi})= (1-{\xi}^{2})(1+c{\xi}),
\end{equation*}
and the parameters vary in the ranges
\begin{equation*}
0<c\leq b<1.
\end{equation*}
The parameter ${\kappa}$ has the dimension of length and for thin rings it is roughly the radius of the ring circle. The constant $C$ is given in terms of $b$ and $c$ by 
\begin{equation*}
C=\sqrt{2b(b-c)\frac{1+b}{1-b}},
\end{equation*}
and the coordinate ranges for $u$ and $v$ are
\begin{equation*}
-1\leq u\leq 1, \quad -\infty \leq v\leq -1
\end{equation*}
with asymptotic infinity recovered as $u\to v\to -1$. For the ${\varphi}$-coordinate the axis of rotation is $v=-1$, and for the ${\psi}$-direction the axis is divided in two components. First $u=1$ which is the disc bounded by the ring, and second $u=-1$ which is the outside of the ring, that is up to infinity. The horizon is located at $v=-\frac{1}{c}$ and outside of it at $v=-\frac{1}{b}$ lies an ergosurface. As argued in \cite[Sec.~5.1.1]{Emparan:2008aa} three independent parameters $b$, $c$, ${\kappa}$ is one too many, since for a ring with a certain mass and angular momentum we expect its radius to be dynamically fixed by the balance between centrifugal and tensional forces. This is here the case as well, because in general there are conical singularities on the plane containing the ring, $u=\pm 1$. In order to cure them ${\varphi}$ and ${\psi}$ have to be identified with periodicity 
\begin{equation*}
{\Delta}{\varphi}={\Delta}{\psi}=4{\pi} \frac{\sqrt{F(-1)}}{|G'(-1)|}=2{\pi}\frac{\sqrt{1-b}}{1-c},
\end{equation*}
and the two parameters have to satisfy
\begin{equation} \label{eq:BRb}
b=\frac{2c}{1+c^{2}}. 
\end{equation}
This leaves effectively a two-parameter family of solutions as expected with the Killing vector fields $X_{0}=∂_{t}$, $X_{1}=∂_{\varphi}$ and $X_{2}=∂_{\psi}$. 
For the moment, however, we will keep the conical singularity in and regard the parameter $b$ as free. It can be eliminated at any time using \eqref{eq:BRb}.

A straightforward calculation shows
\begin{equation*}
\det J=\frac{4{\kappa}^{4}}{(u-v)^{4}}G(u)G(v), 
\end{equation*}
hence we define
\begin{equation*}
r=\frac{2{\kappa}^{2}}{(u-v)^{2}}\sqrt{-G(u)G(v)}. 
\end{equation*}
The harmonic conjugate can be calculated in the same way as for the Myers-Perry solution (for details see \cite[App.~H]{Harmark:2004rm}) and one obtains
\begin{equation*}
z=\frac{{\kappa}^{2}(1-uv)(2+cu+cv)}{(u-v)^{2}}.
\end{equation*}
Using expressions for $u$, $v$ in terms of $r$, $z$ (see \cite[App.~H]{Harmark:2004rm})
\begin{align*}
u & = \frac{(1-c)R_{1}-(1+c)R_{2}-2R_{3}+2(1-c^{2}){\kappa}^{2}}{(1-c)R_{1}+(1+c)R_{2}+2cR_{3}} \\
v & = \frac{(1-c)R_{1}-(1+c)R_{2}-2R_{3}-2(1-c^{2}){\kappa}^{2}}{(1-c)R_{1}+(1+c)R_{2}+2cR_{3}},
\end{align*}
where
\begin{equation*} 
R_{1}=\sqrt{r^{2}+(z+c{\kappa}^{2})^{2}},\ R_{2}=\sqrt{r^{2}+(z-c{\kappa}^{2})^{2}},\ R_{3}=\sqrt{r^{2}+(z-{\kappa}^{2})^{2}},
\end{equation*}
the $J$-matrix can be computed as 
\begin{align*} 
J_{00} & = -\frac{(1+b)(1-c)R_{1}+(1-b)(1+c)R_{2}-2(b-c)R_{3}-2b(1-c^{2}){\kappa}^{2}}{(1+b)(1-c)R_{1}+(1-b)(1+c)R_{2}-2(b-c)R_{3}+2b(1-c^{2}){\kappa}^{2}}, \\
 J_{01} & = -\frac{2C{\kappa}(1-c)[R_{3}-R_{1}+(1+c){\kappa}^{2}}{(1+b)(1-c)R_{1}+(1-b)(1+c)R_{2}-2(b-c)R_{3}+2b(1-c^{2}){\kappa}^{2}}, \\
 J_{22} & = \frac{(R_{3}+z-{\kappa}^{2})(R_{2}-z+c{\kappa}^{2})}{R_{1}-z-c{\kappa}^{2}}, \\
 J_{11} & = -\frac{r^{2}}{J_{00}J_{22}}+\frac{J_{01}^{2}}{J_{00}},
\end{align*}
with the remaining components vanishing, and
\begin{align*} 
{\erm}^{2ν} & = \left[(1+b)(1-c)R_{1}+(1-b)(1+c)R_{2}+2(c-b)R_{3}+2b(1-c^{2}){\kappa}^{2}\right]\\
 & \hspace{0.4cm} ×\frac{(1-c)R_{1}+(1+c)R_{2}+2cR_{3}}{8(1-c^{2})^{2}R_{1}R_{2}R_{3}}.
\end{align*}

The rod structure consists of four components $(-\infty ,-c{\kappa}^{2})$, $(-c{\kappa}^{2},c{\kappa}^{2})$, $(c{\kappa}^{2},{\kappa}^{2})$, $({\kappa}^{2},\infty )$. 
\begin{enumerate}[(1)]
\item For $r=0$ and $z\in ({\kappa}^{2},\infty )$ we have $R_{3}-R_{1}+(1+c){\kappa}^{2}=0$ which implies $J_{01}=J_{11}=0$. Hence, the interval $({\kappa}^{2},\infty )$ is a semi-infinite spacelike rod in direction $∂_{\varphi}$.
\item For $r=0$ and $z\in (c{\kappa}^{2},{\kappa}^{2})$ we have $R_{2}+R_{3}-(1-c){\kappa}^{2}=0$ which implies $J_{22}=0$. Hence, the interval $(c{\kappa}^{2},{\kappa}^{2})$ is a finite spacelike rod in direction $∂_{\psi}$.
\item For $r=0$ and $z\in (-c{\kappa}^{2},c{\kappa}^{2})$ we have $R_{1}+R_{2}-2c{\kappa}^{2}=0$ which implies that the kernel of $J$ in this range is spanned by the vector
\begin{equation*}
\left(\begin{array}{ccc}1 & {\Gamma} & 0\end{array}\right)^{\mathrm{t}}, \quad \text{ where } {\Gamma}=\frac{b-c}{(1-c)Cκ}
\end{equation*}
is again the angular velocity. Thus, $(-c{\kappa}^{2},c{\kappa}^{2})$ is a finite timelike rod and it can be shown that it corresponds to an event horizon with topology $S^{2}\times S^{1}$ (a brief reasoning can be found in \cite{Hollands:2008fp}, proof of Proposition~2 in Section~3).
\item For $r=0$ and $z\in (-\infty ,-c{\kappa}^{2})$ we have $R_{1}-R_{3}+(1+c){\kappa}^{2}=0$ which implies $J_{22}=0$. Hence, the interval $(-\infty ,-c{\kappa}^{2})$ is a semi-infinite spacelike rod in direction $∂_{\psi}$.
\end{enumerate}

As before we compute the patching matrix. However, this time some of the metric components vanish and the metric can be written as
\begin{align*}
\drm s^{2} & = J_{00} \, \drm t^{2} +2 J_{01} \, \drm t \drm {\varphi} + J_{11} \, \drm {\varphi}^{2} + J_{22} \, \drm {\psi}^{2} + \erm^{2{\nu}}\left(\drm r^{2}+\drm z^{2}\right)\\
& = -F^{2} \left(\drm t + {\omega} \, \drm {\varphi} \right)^{2} + G^{2} \, \drm {\varphi}^{2}+ H^{2} \, \drm {\psi}^{2} + \erm^{2{\nu}}\left(\drm r^{2}+\drm z^{2}\right)
\end{align*}
with
\begin{equation*}
F^{2} = - J_{00},\ F^{2}{\omega} = -J_{01},\ G^{2}-F^{2}{\omega}^{2} = J_{11},\ H^{2} = J_{22}.
\end{equation*}
We see immediately that only one of the twist 1-forms is non-vanishing. On the top end rod $∂_{\varphi}=0$ so that the relevant Killing 1-forms are
\begin{align*}
\frac{\partial }{\partial t} & \to  T = -F^{2} \left(\drm t + {\omega} \, \drm {\varphi} \right) = - F {\theta}^{0} \\
\frac{\partial }{\partial {\psi}} & \to  Ψ = H^{2} \, \drm {\psi} = H {\theta}^{2},
\end{align*}
where we used again the orthonormal basis
\begin{equation*} 
{\theta}^{0}= F\left(\drm t + {\omega} \, \drm {\varphi} \right),\ {\theta}^{1} = G\, \drm {\varphi},\ {\theta}^{2}= H\, \drm {\psi},\ {\theta}^{3}= \erm^{{\nu}} \, \drm r,\  {\theta}^{4}= \erm^{{\nu}} \, \drm z.
\end{equation*}
For the twist 1-form we then get
\begin{align*}
\drm {\chi} & =  \partial _{r}{\chi}\, \drm r + \partial _{z} {\chi}\, \drm z = * (T\wedge Ψ\wedge \drm T)\\
& = * (F {\theta}^{0}\wedge H{\theta}^{2}\wedge F^{2}\, \drm {\omega} \wedge G^{-1}{\theta}^{1})\\
& = - \frac{F^{3}H}{G} *({\theta}^{0}\wedge {\theta}^{1}\wedge {\theta}^{2}\wedge \drm {\omega}) \\
& = - \frac{F^{3}H}{G} *\Big({\theta}^{0}\wedge {\theta}^{1}\wedge {\theta}^{2}\wedge (\partial _{r}{\omega}\, \drm r+\partial _{z}{\omega}\, \drm z)\Big),
\end{align*}
thus
\begin{equation*}
\partial _{r} {\chi} = {\epsilon} \frac{F^{3}H}{G} \partial _{z} {\omega},\quad \partial _{z} {\chi} = - {\epsilon} \frac{F^{3}H}{G} \partial _{r} {\omega}. 
\end{equation*}
Note that
\begin{equation*}
{\omega} = \frac{J_{01}}{J_{00}}, \quad G^{2} = - \frac{r^{2}}{J_{00}J_{22}},
\end{equation*}
as $-r^{2} = \det J = (J_{00}^{\vphantom{1}}J_{11}^{\vphantom{1}}-J_{01}^{2})J_{22}^{\vphantom{1}}$. On $r=0$ we also see that
\begin{equation*}
R_{1} = |z+c{\kappa}^{2}|, \quad R_{2} = |z-c{\kappa}^{2}|, \quad R_{3} = |z-{\kappa}^{2}|,
\end{equation*}
and for ${\kappa}^{2}<z<\infty $ the moduli signs can be dropped. Then the metric coefficients behave as 
\begin{equation*}
J_{00} = \mathcal{O}(1),\quad J_{01} = \mathcal{O}(r^{2}),\quad J_{22}=\mathcal{O}(1), \quad {\omega}^{2}= \mathcal{O}(r^{2}),
\end{equation*}
so that we obtain
\begin{align*}
\partial _{z} {\chi} & = - {\epsilon} \frac{(-J_{00})^{\frac{3}{2}}(J_{22})^{\frac{1}{2}}}{r} (-J_{00})^{\frac{1}{2}} (J_{22})^{\frac{1}{2}}\, \partial _{r} \left(\frac{J_{01}}{J_{00}}\right) \\
& = -{\epsilon} \frac{J_{00}^{2}J_{22}^{\vphantom{1}}}{r}\, \partial _{r}\left(\frac{J_{01}^{\vphantom{1}}}{J_{00}^{\vphantom{1}}}\right).
\end{align*}
Now, if $J_{01}=r^{2}B(z)+\mathcal{O}(r^{4})$, then
\begin{equation} \label{eq:BRtwistpot1}
\lim_{r\to 0} \partial _{z}{\chi} = -{\epsilon} \lim_{r\to 0} 2J_{00}J_{22}B(z).
\end{equation}
In order to determine $B(z)$ we do some auxiliary calculations. Denote $α=cκ^{2}$, $β=κ^{2}$. Then with $z>β$ and to leading order in $r$ it is
\begin{align*}
R_{1} & = (z+α)\left(1+\frac{r^{2}}{2(z+α)^{2}}\right),\ R_{2} = (z-α)\left(1+\frac{r^{2}}{2(z-α)^{2}}\right) \\
R_{3} & = (z-β)\left(1+\frac{r^{2}}{2(z-β)^{2}}\right),
\end{align*}
whence
\begin{equation*}
J_{22} = 2 (z-β)\frac{2(z+α)}{r^{2}}\frac{r^{2}}{2(z-α)} = \frac{2(z-β)(z+α)}{z-α}. 
\end{equation*}
Second we compute
\begin{equation} \label{eq:BRlambda}
J_{00} = -\frac{z-α}{z+λ}, \quad \text{where } λ= κ^{2}⋅\frac{2b-bc-c}{1-c}.
\end{equation}
Last, we obtain
\begin{equation*}
J_{01} = -\frac{C(1-c)\kappa^3}{2(1-b)}\frac{1}{(z-β)(z+α)(z+\lambda)}⋅r^{2}. 
\end{equation*}
Using these results \eqref{eq:BRtwistpot1} can be integrated to
\begin{equation*}
\left.{\chi}\right|_{r=0} = \frac{2\nu}{z+\lambda},\quad \nu=\frac{\epsilon C(1-c)\kappa^3}{1-b}.
\end{equation*}
Note that this agrees up to a constant with \cite[Eq.~(25)]{Tomizawa:2004aa}. Now we can compute the quantities which go in the patching matrix. 
The restriction $r=0$ is not explicitly mentioned, but still assumed in the following.
\begin{align*}
g{\chi} & = \frac{{\chi}}{J_{00}J_{22}} = -\frac{ν}{(z-β)(z+α)},\\
g & = \frac{1}{J_{00}J_{22}} = -\frac{z+λ}{2(z+α)(z-β)},
\end{align*}
For the last matrix entry we first calculate some auxiliary quantities. From \eqref{eq:BRlambda} we obtain
\begin{equation*}
b = \frac{λ+α}{λ+2β-α}, 
\end{equation*}
hence
\begin{equation*} 
b-c=\frac{(β-α)(λ-α)}{β(λ+2β-α)}, \quad 1+b = \frac{2(λ+β)}{λ+2β-α}, \quad 1-b = \frac{2(β-α)}{λ+2β-α}.
\end{equation*}
This yields
\begin{equation*}
2ν^{2} = \frac{4b(b-c)(1+b)(1-c)^{2}κ^{6}}{(1-b)^{3}} = (λ+α)(λ-α)(λ+β),
\end{equation*}
which in turn justifies the following factorization
\begin{equation*} 
(z-α)(z+α)(z-β)+2ν^{2} = \big(z+λ\big)\big(z^2-(β+λ)z -α^2+β λ+λ^2\big).
\end{equation*}
and eventually
\begin{align*}
J_{00}+g{\chi}^{2} & = -\frac{z-α}{z+λ} - \frac{2ν^{2}}{(z+λ)(z+α)(z-β)} \\
& = -\frac{z^2-(β+λ)z -α^2+β λ+λ^2}{(z+α)(z-β)}.
\end{align*}
The patching matrix for $z\in (β,\infty )$ and $r=0$ is now 
\begin{equation} \label{eq:BRpatmat}
\renewcommand\arraystretch{2.5}
P_{1}=\left(\begin{array}{ccc}-\dfrac{z+{\lambda}}{2(z+α)(z-β)} & \dfrac{{\nu}}{(z+α)(z-β)} & 0 \\
\cdot  & -\dfrac{z^{2}-{\gamma}z+{\delta}}{(z+α)(z-β)} & 0 \\
0 & 0 & \dfrac{2(z+α)(z-β)}{z-α}\end{array}\right),
\end{equation}
where the index again only indicates that it is adapted to the part of the axis which extends to $+\infty $ and where
\begin{equation} \label{eq:BRpar}
\renewcommand\arraystretch{2.5}
\begin{array}{ccc} α=c{\kappa}^2, & β={\kappa}^2, & {\lambda}=κ^{2}⋅\dfrac{2b-bc-c}{1-b}, \\
{\nu}=\dfrac{\epsilon C(1-c)\kappa^3}{1-b}, & {\gamma} = κ^{2}+λ, & {\delta}=-c^{2}κ^{4}+κ^{2} λ+λ^2 \end{array}.
\end{equation}
Note that this is based on the assumption that the periodicity of ${\varphi}$, ${\psi}$ is $2{\pi}$, otherwise it has to be modified according to \cite[Eq.~(4.17)]{Harmark:2004rm}.

From \eqref{eq:asymptPbot} we read off the conserved Komar quantities as
\begin{equation*} 
M = \frac{3{\pi}}{4}(λ+c^{2}κ^{4}), \quad L_{1} = \dfrac{π C(1-c)\kappa^3}{1-b}, \quad L_{2}=0.
\end{equation*}
\section{The Converse} \label{sec:converse}
As already mentioned in the introduction, the following is an immediate consequence of the twistor construction that we described in Part~I. 
\begin{cor}
The patching matrix $P$ (adapted to any portion of the axis $r=0$) determines the metric and conversely. 
\end{cor}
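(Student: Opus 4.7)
The statement has two directions, and both are essentially packaged in Summary~\ref{thm:summary}; the task here is to spell out what little remains.

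For the forward direction (metric determines $P$), one proceeds exactly as in Section~\ref{sec:Pexamples}: read off $J(r,z)$ from the $\sigma$-model form of the metric, compute the twist potentials on the axis by the procedure developed in Section~2.2, assemble the Ernst potential $J'$ adapted to the chosen rod via the formula in Summary~\ref{thm:summary}, and set $P(z) = J'(0,z)$. Axis-regularity is what fixes the three integers of the twistor data as $(p_0,p_1,p_2)=(0,0,0)$ and makes $P$ the whole of that data.

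For the converse direction ($P$ determines the metric) the twistor correspondence of Part~I does the real work: $P$, together with axis-regularity, specifies the rank-$3$ bundle $E\to\mathcal{R}$, and the one-to-one correspondence in Summary~\ref{thm:summary} then produces a unique stationary axisymmetric five-dimensional space-time. One can also realize this more concretely in three steps: (i) extend $P(z)$ off the axis to $J'(r,z)$ by analytic continuation, which is possible and unique because an axis-regular Ernst potential is real-analytic in $r^{2}$ and satisfies the elliptic Ernst system; (ii) invert the algebraic formula defining $J'$ in Summary~\ref{thm:summary} to recover $J$ and the twist potentials $\chi_{1},\chi_{2}$; (iii) reconstruct the conformal factor $e^{2\nu}$ up to a multiplicative constant by the standard Weyl quadrature in $(r,z)$, whose integrability is guaranteed by the Ernst equations for $J'$.

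The only genuine obstacle is step~(i): without the twistor machinery of Part~I, existence and uniqueness of the off-axis extension would require a separate elliptic-PDE argument. The twistor picture circumvents this by encoding the extension directly in the bundle $E$, whose splitting over generic twistor lines reproduces $J'(r,z)$ throughout the region where the Ernst potential is regular; steps (ii)--(iii) are then purely algebraic, respectively a standard line-integral reconstruction.
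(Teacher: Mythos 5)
Your proposal follows essentially the same route as the paper's (two-line) sketch: the converse direction rests on the splitting procedure of Part~I producing $J'(r,z)$ from $P(w)$, and the forward direction is that $P(w)$ is the analytic continuation of $J'(r=0,z)$ built from $J$ and the twist potentials. The extra detail you supply --- recovering $J$ and $\chi_1,\chi_2$ algebraically from $J'$ and the conformal factor $e^{2\nu}$ by quadrature, and correctly identifying that the off-axis extension is what the twistor machinery provides rather than a separate elliptic argument --- is consistent with, and a reasonable expansion of, what the paper leaves implicit.
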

\begin{proof}[Sketch of Proof]
$J'(r,z)$ is obtained from $P(w)$ by the splitting procedure, see Section~3 in Part~I, and conversely $P(w)$ is the analytic continuation of $J'(r=0,z)$.
\end{proof}

It is known that the classification of black holes in four dimensions does not straight-forwardly generalize to five dimensions. The Myers-Perry solution and the black ring are 
space-times whose range of parameters (mass and angular momenta) do have a non-empty intersection, but their horizon topology is different, which means they cannot be isometric. 
In order to address this issue the rod structure is introduced to supplement the set of parameters. 

Using this extended set of parameters the following theorem from \cite{Hollands:2008fp} is a first step towards a classification.

\begin{thm} \label{thm:holluniqueness}
Two five-dimensional, asymptotically flat vacuum space-times with connected horizon where each of the space-times admits three commuting Killing vector fields, one time translation 
and two axial Killing vector fields, are isometric if they have the same mass and two angular momenta, and their rod structures coincide.
\end{thm}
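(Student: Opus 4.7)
The plan is to use the preceding Corollary to reduce the statement to the uniqueness of the patching matrix. Concretely, after choosing coordinates so that the two space-times have the same rod structure on $r=0$ (possible because the rod structure is part of the hypothesis), it suffices to show that the patching matrix $P$ adapted to the top semi-infinite rod is determined by $(M,L_{1},L_{2})$ and the rod structure alone. The Corollary then produces an isometry between the two space-times.

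First I would invoke the Proposition preceding Section~\ref{sec:Pexamples}: the real poles of $P$ lie precisely at the nuts of the rod structure and are simple. Axis-regularity at each nut further forces the residue there to be of rank one, with kernel spanned by the rod vector of the adjacent rod. So the locations of the poles and the directions of the residues are encoded entirely in the rod structure, leaving only a scalar strength at each nut and the ``entire'' holomorphic part of $P$ as genuine unknowns.

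Next, the asymptotic expansion \eqref{eq:asymptPtop} prescribes $P$ at $z\to\infty$ to the relevant order in terms of $M$, $L_{1}$, $L_{2}$ together with the constants $\eta$ (removable by $z\mapsto z+\text{const.}$) and $\zeta$. From this asymptotic form I would propagate $P$ down the axis by applying at each nut the switching rule: Theorem~6.5 of Part~I at the nut at infinity, and its generalisation to be developed in Section~\ref{sec:converse} at finite nuts. Each switching depends only on the nut position and the two adjacent rod vectors, so after traversing all nuts the resulting adaptation depends only on the rod structure and on the asymptotic parameters. Requiring that the adaptation reached at the bottom asymptotic rod match \eqref{eq:asymptPbot} with the same $(M,L_{1},L_{2})$ (and with $\eta\mapsto -\eta$, $\zeta\mapsto\zeta$) closes the system and should fix the remaining scalar unknowns.

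The main obstacle is to show that this over-determined system of switching and asymptotic-matching conditions always admits a unique solution --- that is, that the converse construction developed in Section~\ref{sec:converse} leaves no residual freedom once $(M,L_{1},L_{2})$ and the rod structure are fixed, and that it really delivers a symmetric meromorphic $P$ with the right global structure. This is the genuine content of the uniqueness theorem from the twistor perspective, and its verification amounts to establishing the consistency and completeness of the ansatz and switching rules of Section~\ref{sec:converse}. Once this is in hand, the theorem follows immediately from the Corollary.
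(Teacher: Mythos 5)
There is a genuine gap, and it is worth being precise about where it lies. Theorem~\ref{thm:holluniqueness} is not proved in this paper at all: it is imported verbatim from \cite{Hollands:2008fp}, where the proof proceeds by the Mazur-identity/harmonic-map method --- the reduced field equations are recast as a harmonic map from the orbit space into a coset space, and a divergence identity forces two solutions with the same rod structure and angular momenta to coincide. That argument is entirely independent of the twistor machinery, and indeed the paper needs the theorem as an \emph{external} input before it can even pose its own programme.

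Your proposed route inverts this logic. The assertion that the switching rules plus the asymptotic data $(M,L_{1},L_{2})$ pin down $P$ uniquely is precisely the content of the Conjecture stated immediately after Theorem~\ref{thm:holluniqueness} (``Rod structure and angular momenta determine $P$''), which the paper explicitly leaves open and only verifies by hand in examples with at most three nuts --- and even there, in the three-nut case, only under the extra hypothesis of a hypersurface-orthogonal Killing vector. The step you flag as ``the main obstacle'' is therefore not a technical verification to be supplied; it \emph{is} the open problem, and no argument in Section~\ref{sec:converse} establishes it in general (the determinant and minor conditions are highly nonlinear, and the paper itself raises the possibility of multiple branches and of obstructions such as conical singularities and jumping lines). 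There are also smaller unsupported claims in your sketch --- e.g.\ that axis-regularity forces each residue to be rank one with kernel spanned by the adjacent rod vector, and that the finite-nut switching depends only on the nut position and rod vectors (Theorem~\ref{thm:switching} also requires a gauge in which the twist potentials vanish at the nut, and is derived only to leading order before the splitting argument) --- but these are secondary to the fact that the proposal assumes the conjecture it would need to prove. As written, the argument is circular with respect to the paper's own stated open question, whereas the theorem's actual proof lives elsewhere and uses different tools.
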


Note, however, that \cite[Prop.~3.1]{Chrusciel:2011eu} suggests that by adding the rod structure to the list of parameters the mass becomes redundant, at least for connected 
horizon.

Theorem~\ref{thm:holluniqueness} answers the question about uniqueness of five-dimensional black holes, but not existence. In other words, we do not yet know which combinations of 
rod structure and angular momenta are permitted, and how they determine the twistor data, that is essentially $P$, and thereby the metric. It is natural to conjecture:

\begin{conj}
Rod structure and angular momenta determine $P$ (even for a disconnected horizon). 
\end{conj}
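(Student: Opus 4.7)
The plan is to build $P$ as an explicit ansatz whose form is dictated by the pole proposition together with the asymptotic expansion, and then to use the switching theorem to turn the rod data into algebraic constraints that pin $P$ down uniquely. Concretely, if the rod structure has nuts $a_1<a_2<\dots<a_{n-1}$, I would write the adaptation to the top (semi-infinite) rod as
\begin{equation*}
P(z) \;=\; P_\infty(z) \;+\; \sum_{i=1}^{n-1} \frac{R_i}{z-a_i},
\end{equation*}
where $P_\infty(z)$ is the matrix whose entries are determined by the asymptotic expansions \eqref{eq:asymptPtop}--\eqref{eq:asymptPbot} in terms of the conserved Komar charges $M,L_1,L_2$ (together with the gauge-dependent $\zeta,\eta$), and the $R_i$ are symmetric residue matrices. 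By the proposition stated before Section~\ref{sec:Pexamples}, no other real singularities are permitted, so every degree of freedom in $P$ is contained in $P_\infty$ and in the $R_i$.

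The next step is to show that the rod vectors attached to the rod structure fix the $R_i$. At each nut $a_i$ the adaptation to the top rod must, after applying the local switching transformation (Theorem~6.5 of Part~I and its interior-nut analogue to be derived in Section~\ref{sec:converse}), become regular on the rod immediately below $a_i$; simultaneously the kernel condition $J'(0,z)\mathbf{v}=0$ for the rod vector $\mathbf{v}$ of that lower rod must hold there. Translated into residue data, this forces each $R_i$ to be of restricted rank with image and cokernel determined explicitly by the two rod vectors meeting at $a_i$. The residual freedom in each $R_i$ is precisely one scalar, which upon inspection is the angular velocity / angular momentum associated to the nut.

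A parameter count then closes the argument. On one side we have the prescribed data: the positions $a_i$, the sequence of rod vectors, and the angular momenta $L_1,L_2$ (with the mass $M$ either prescribed or, following the remark after Theorem~\ref{thm:holluniqueness}, redundant). On the other side we have the coefficients of $P_\infty$ and the restricted residues $R_i$, with gauge freedom from the $z$-translation and the rescaling of $\varphi,\psi$. The claim is that these two sides match. Uniqueness is the easier half: two patching matrices with identical pole positions, residues and polynomial part at infinity necessarily agree, so once the constraints are seen to determine all parameters, $P$ is unique. The consistency of the ansatz with the known cases (Minkowski \eqref{eq:PMink}, Myers--Perry \eqref{eq:MPpatmat}, black ring \eqref{eq:BRpatmat}) provides a crucial sanity check that the proposed constraint system has the correct rank.

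The hard part is the existence direction, particularly for disconnected horizons. Regularity on an interior spacelike rod imposes that the switched $P$ be axis-regular there, and combining the switching data at the two endpoints of such a rod gives a compatibility condition on $R_i$ and $R_{i+1}$; these are precisely the conditions that later appear as the conical-singularity balance equations discussed at the end of Section~\ref{sec:converse}. One expects these to be solvable generically, but for disconnected horizons the conditions may over-determine the system and restrict the admissible angular momenta, so the sharpest form of the conjecture may require either additional parameters (e.g.\ Komar angular momenta on each horizon component) or an explicit description of the hypersurface in parameter space where solutions exist. Establishing that the splitting procedure applied to the resulting $P$ produces a smooth, globally defined Lorentzian metric — rather than merely a formal solution on the axis — is the deepest obstacle, and is where a proof as opposed to an ansatz-level verification would require the most work.
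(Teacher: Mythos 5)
The statement you are asked to prove is stated in the paper as a \emph{conjecture}, and the paper does not prove it: it only provides an ansatz ($P=P'/\Delta$ with $\Delta=\prod(z-a_i)$ and polynomial entries of degrees fixed by the asymptotics) and verifies, case by case, that the ansatz can be closed for rod structures with one, two, and three nuts, recovering Minkowski space, Myers--Perry and the black ring. Your partial-fraction decomposition $P=P_\infty+\sum_i R_i/(z-a_i)$ is equivalent in spirit to that ansatz, and your identification of the switching procedure and the conicality conditions as the source of the constraints matches the paper's machinery. But your proposal is not a proof either, and the step where it fails is the central one: you assert that the local analysis at each nut forces each residue $R_i$ to have rank and image/cokernel fixed by the two rod vectors, with exactly one residual scalar per nut, and that a parameter count then ``closes the argument.'' Neither claim is established, and the paper's own computations suggest the picture is more complicated. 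The constraints actually used there --- divisibility of all $2\times 2$ minors of $P'$ by $\Delta$ (Corollary~\ref{cor:invpmatrix}), the unit-determinant condition, and matching to the asymptotics at \emph{both} ends of the axis --- couple the coefficients globally across nuts (e.g.\ $c_6=4c_1+2c_3$ in the two-nut case), produce multiple algebraic branches of which some must be excluded by positivity/nondegeneracy arguments, and in the one-nut case show that the map from data to $P$ is not even defined for generic data (all conserved charges are forced to vanish). A nut-local residue analysis with a clean one-scalar-per-nut count would not reproduce these features.

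The second gap is one you partly acknowledge yourself: even granting that the constraint system determines $P$ uniquely, the conjecture requires existence, i.e.\ that the system is \emph{solvable} for the admissible data and that the resulting $P$ splits to a globally regular metric. The paper explicitly leaves this open (jumping lines away from the axis, admissibility of rod structures, the disconnected-horizon case), and your closing paragraph concedes the same. So what you have written is a reasonable research programme that parallels the paper's, not a proof; to be counted as a proof of the conjecture it would need (i) a rigorous derivation of the structure of the residues $R_i$ from axis-regularity and switching, valid for arbitrarily many nuts, (ii) a demonstration that the resulting nonlinear system (including the determinant condition and the conicality balances on interior spacelike rods) has a unique admissible solution, and (iii) a regularity argument for the reconstructed space-time. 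None of these is supplied.
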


\subsection[From Rod Structure to Patching Matrix]{From Rod Structure to Patching Matrix --- an Ansatz}

In the following we will present an ansatz for this reconstruction of the patching matrix from the given data, exemplified in cases where the rod structure has up to three 
nuts. 

Given a rod structure with nuts at $\{a_{i} | a_{i}∈ℝ\}_{1≤i≤N}$ we know that $P$ can at most have single poles at these nuts, see Corollary~6.3 and Proposition~6.4 in Part~I.
We shall see that this fact can also be derived from the switching procedure (Theorem~\ref{thm:switching}) and thus we make the ansatz
\begin{equation*}
P(z) = \frac{1}{{\Delta}} P'(z), 
\end{equation*}
where ${\Delta}=\prod _{i=1}^{N}(z-a_{i})$ and the entries of $P'(z)$ are polynomials in $z$. If we now choose $P$ to be adapted to the top outermost rod 
$(a_{\scriptscriptstyle N},\infty )$, then Section~\ref{subsec:exasympt} tells us its asymptotic behaviour as $z → ∞$, that is $P$ asymptotes $P_{+}$ given in \eqref{eq:asymptPtop}. 
This implies that the entries of $P'(z)$ are in fact polynomials of the following degrees,
\begin{equation*}
\renewcommand{\arraystretch}{1.5}
P'(z) = \left(\begin{array}{ccc} q_{\scriptscriptstyle N-1}(z) & q_{\scriptscriptstyle N-2}(z) & q_{\scriptscriptstyle N-2}(z) \\ \cdot  & q_{\scriptscriptstyle N}(z) & 
q_{\scriptscriptstyle N-1}(z) \\ \cdot  & \cdot  & q_{\scriptscriptstyle N+1}(z)\end{array}\right),
\end{equation*}
where $q_{k}$ is a polynomial of degree $k$. (Here the notation shall just indicate the degree of the polynomials, that is two appearances of $q_{\scriptscriptstyle N-1}$ or 
$q_{\scriptscriptstyle N-2}$ in different entries of the matrix can still be different polynomials, and if $N-2<0$ then it shall be the zero-polynomial.) In fact, 
from \eqref{eq:asymptPtop} we can not only deduce the degree of the polynomials but also their leading coefficients. The diagonal entries will have leading coefficient 
$-\frac{1}{2}$, $-1$, and 2, respectively, and the leading coefficients on the superdiagonal will be proportional to the angular momenta. Similarly, one can use 
\eqref{eq:asymptPbot} for $P$ adapted to the bottom outermost rod $(-∞, a_{1})$. Note that this does not impose any further restrictions on the coefficients of the 
space-time metric apart from being analytic.

The number of free parameters in $P$ equals the number of independent coefficients in the polynomials. Our aim must be to tie down our space-time metric by fixing all those 
parameters in terms of the $a_{i}$ and the angular momenta $L_{1}$, $L_{2}$. Any free parameter left in $P$ is then a free parameter in our (family of) solutions. 

\begin{ex}[One-Nut Rod Structure]\mbox{}\\
Consider the case where the rod structure has one nut, which is without loss of generality at the origin (remember that a shifted rod structure corresponds to a 
diffeomorphic space-time), see Figure~\ref{fig:onenutrodstr}. We do not make assumptions about the angular momenta $L_{1}$, $L_{2}$. 

\begin{figure}[htbp]
\begin{center}
     \scalebox{0.8}{\input{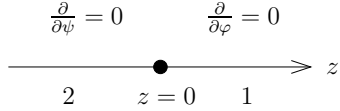}}
     \caption{Rod Structure with one nut at the origin. The numbers are just labelling the parts of the axis.} 
     \label{fig:onenutrodstr}
\end{center}
\end{figure}
According to our ansatz we have for the patching matrix on the top part of the axis
\begin{equation*}
\renewcommand{\arraystretch}{1.5}
P_{1} = \frac{1}{z} 
\left(\begin{array}{ccc}
-\dfrac{1}{2}  & 0 & 0 \\
0 & -z + c_{1} & c_{2} \\
0 & c_{2} & 2z^{2} + c_{3}z+ c_{4}
\end{array}\right),
\end{equation*}
which implies $L_{1}=ζ=0$. On the other hand for the bottom part it is
\begin{equation*}
\renewcommand{\arraystretch}{1.5}
P_{2} = \frac{1}{z} 
\left(\begin{array}{ccc}
-\dfrac{1}{2}  & 0 & 0 \\
0 & -z + \tilde c_{1} & \tilde c_{2} \\
0 & \tilde c_{2} & 2z^{2} + \tilde c_{3}z+ \tilde c_{4}
\end{array}\right),
\end{equation*}
and therefore necessarily $L_{2}=ζ=0$, thus $c_{2}=0$. This forces the patching matrix to be diagonal and since it has to have unit determinant,
\begin{equation*}
\det P_{1} = \frac{1}{z^{3}} \left(-\frac{1}{2} \right)\left(-z + c_{1}\right)\left(2z^{2} + c_{3}z+ c_{4}\right) = 1,
\end{equation*}
we obtain $c_{1}=c_{3}=c_{4}=0$. But this is the patching matrix for flat space, see \eqref{eq:PMink}. 

Hence we have shown that for a rod structure with one nut not all values for the conserved quantities are allowed, in fact they all (including mass) have to vanish, 
which in turn uniquely determines the space-time as Minkowski space.\\ \mbox{} \hfill $\blacksquare$
\end{ex}

Attempting the same for a rod structure with two nuts one will quickly notice that more tools are necessary in order to fix all the parameters. Here 
Corollary~6.7 in Part~I is useful.
\begin{cor} \label{cor:invpmatrix}
In five space-time dimensions, if $P_{+}$ is the patching matrix adapted to $(a_{\scriptscriptstyle N}, ∞)$, then 
${\Delta} \coloneqq\prod _{i=1}^{N}(z-a_{i})$ divides all $2×2$-minors of $Δ⋅P_{+}^{\vphantom{-1}}=P'_{+}$.
\end{cor}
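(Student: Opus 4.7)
My plan is to reduce the divisibility statement to a pole-order bound on $P_+^{-1}$ and then invoke the constraint that only simple poles can occur at the nuts. The starting point is the identity $\det P(z) = 1$. This follows by a direct computation on the block form of $J'$ in Summary~\ref{thm:summary}: pulling out the overall factor $(\det\tilde{A})^{-3}$ and expanding the remaining determinant by Schur complement at the scalar $(1,1)$-pivot, the lower-right $2\times 2$ block collapses to $(\det\tilde{A})\,\tilde{A}$, whose determinant $(\det\tilde{A})^{3}$ exactly cancels the prefactor, giving $\det J' = 1$. Analytic continuation to $r = 0$ then yields $\det P = 1$.

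With $\det P = 1$ in hand, $\det P'_+ = \Delta^{3}$, and the classical adjugate identity gives
\begin{equation*}
\mathrm{adj}(P'_+) \;=\; \det(P'_+)\,(P'_+)^{-1} \;=\; \Delta^{2}\, P_+^{-1}.
\end{equation*}
The entries of $\mathrm{adj}(P'_+)$ are, up to signs, precisely the $2\times 2$-minors of $P'_+$. Hence the conclusion of the corollary is equivalent to the statement that $\Delta\, P_+^{-1}$ is holomorphic at every nut $a_i$, i.e.\ $P_+^{-1}$ has at most a simple pole at each $a_i$.

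It then remains to control the pole structure of $P_+^{-1}$. The clean route is to identify $P_+^{-1}$ with the patching matrix of a rank-$3$ bundle naturally associated to the same space-time --- the dual bundle $E^*$ is available since $P_+$ is symmetric and $\mathcal{R}$ is intrinsic --- and then apply the proposition quoted in the excerpt to that bundle, which forces the real singularities of $P_+^{-1}$ to be simple poles at the same set of nuts. An equivalent, more hands-on route is to expand $P_+(z) = R_i/(z-a_i) + A_i + \mathcal{O}(z-a_i)$ near each $a_i$ and show that $\mathrm{rank}\,R_i \le 1$, which makes every $2\times 2$-minor of $R_i$ vanish and kills the would-be $(z-a_i)^{-2}$ term in any $2\times 2$-minor of $P_+$. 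This rank bound is the main obstacle: it expresses the geometric fact that at a nut two adjacent rod vectors span a $2$-dimensional degeneracy of $J$, so that only one independent direction in $J'$ blows up, and making it rigorous requires the local rod-structure analysis developed in Part~I.
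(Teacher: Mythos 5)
Your reduction is correct and is almost certainly the intended route: $\det J'=1$ does follow from the Schur--complement computation you describe, hence $\det P'_+=\Delta^3$, $\mathrm{adj}(P'_+)=\Delta^2P_+^{-1}$, and the corollary becomes equivalent to the statement that $P_+^{-1}$ has at most simple poles at the nuts, i.e.\ that the residue $R_i$ of $P_+$ at each $a_i$ has $\operatorname{rank}R_i\le 1$. Bear in mind that this paper does not prove the corollary at all --- it imports it as Corollary~6.7 of Part~I, immediately after quoting Corollary~6.3 and Proposition~6.4 of Part~I, which give the simple-pole structure of $P$ itself --- so the only comparison available is against that external argument, which your route (1) essentially paraphrases.

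The genuine gap is that neither of your two routes to the remaining statement is carried out, and that statement is the entire content of the corollary: the adjugate identity is an exact rephrasing, not a reduction to something weaker. Route (1) requires knowing that $P_+^{-1}$ is the patching matrix of a bundle to which the quoted simple-pole proposition applies; that proposition concerns the analytic continuation of the Ernst potential $J'(0,z)$ built from the metric, and it is not automatic that the dual bundle's patching matrix arises in this way --- one must argue either that $(J')^{-1}$ is likewise bounded and nondegenerate in the approach to the axis, or that the proof in Part~I is insensitive to replacing $E$ by $E^*$. Route (2) is the correct geometric statement, and it can be checked on the paper's explicit examples: for the black ring \eqref{eq:BRpatmat} the only singular entries at $z=\beta$ are $P_{11},P_{12},P_{22}$, and the vanishing of the $2\times2$ minor of the residue reduces to the identity $(\lambda+\beta)(\lambda^2-\alpha^2)=2\nu^2$ established in Section~2; a similar check works for \eqref{eq:MPpatmat}. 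But verifying examples is not a proof, and you explicitly defer the general rank bound to ``the local rod-structure analysis developed in Part~I.'' As written, the proposal establishes only the equivalence of the corollary with that rank bound (or with the simple-pole property of $P_+^{-1}$), not the corollary itself.
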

In Part~I we have also seen that this guarantees the metric coefficients on $(a_{\scriptscriptstyle N}, ∞)$ to be bounded for 
$z \downarrow a_{\scriptscriptstyle N}$. However, despite the regularity of the metric, this does not have to hold for the other nuts, as we 
have seen for example for the black ring. 
\begin{ex}[Two-Nut Rod Structure] \mbox{}\\
Consider the rod structure as in Figure~\ref{fig:twonutrodstr}. 
\begin{figure}[htbp]
\begin{center}
     \scalebox{0.8}{\input{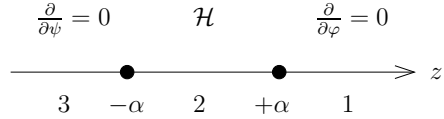}}
     \caption{Rod structure with two nuts.} 
     \label{fig:twonutrodstr}
\end{center}
\end{figure}

In line with the above ansatz we start off from
\begin{equation} \label{eq:MPpatmatansatz}
\renewcommand{\arraystretch}{2.5}
P=\frac{1}{z^{2}-α^{2}} \left(\begin{array}{ccc}-\dfrac{1}{2}z+c_{1} & \dfrac{L_{1}}{π} & c_{2}\\ \cdot  & -z^{2}+c_{3}z+c_{4} 
& -\dfrac{2L_{2}}{π}z+c_{5} \\ \cdot  & \cdot  & 2z^{3}+c_{6}z^{2}+c_{7}z+c_{8}\end{array}\right),
\end{equation}
which we assume to be adapted to the top section of the axis $(\alpha,∞)$ and where the orientation of the basis is without loss of generality chosen such that 
$\epsilon=1$ in \eqref{eq:asymptPtop}. Nondegeneracy requires $\alpha>0$.

One restriction on the constants is immediate from (\ref{eq:MPpatmatansatz}): the top-left entry must not change sign on the top rod so that
\begin{equation}\label{eq:MPcoeff0}
\alpha>2c_1.
\end{equation}
We now make use of Corollary~\ref{cor:invpmatrix} which for the minor obtained by cancelling the third row and first column yields
\begin{equation*}
c_{2}z^{2} + \left(-c_{2}c_{3} - \frac{2L_{1}L_{2}}{π^{2}}\right)z-c_{2}c_{4}+\frac{L_{1}c_{5}}{π} \sim z^{2}-α^{2}, 
\end{equation*}
where $\sim$ means that the left hand side has a factor $z^{2}-α^{2}$. Assume that $L_{1}, L_{2}≠0$, then comparing the (ratio of) 
coefficients gives
\begin{align}
c_{2}c_{3} & = -\frac{2L_{1}L_{2}}{π^{2}}, \label{eq:MPcoeff1}\\
c_{4} & = α^{2} +\frac{L_{1}c_{5}}{πc_{2}}.\label{eq:MPcoeff2}
\end{align}

Choosing the minor obtained from cancelling the second row and third column we get
\begin{equation*}
\frac{L_{2}}{π} z^{2} - \left(\frac{1}{2}c_{5}+\frac{2L_{2}c_{1}}{π}\right)z + c_{1}c_{5} - \frac{L_{1}c_{2}}{π} \sim z^{2}-α^{2},
\end{equation*}
thus
\begin{align}
c_{5} & = - \frac{4L_{2}}{π} c_{1}, \label{eq:MPcoeff3}\\
4 c_{1}^{2} & = α^{2}-\frac{L_{1}}{L_{2}}c_{2}^{\vphantom{1}}. \label{eq:MPcoeff4}
\end{align}
These four equations allow us to express $c_{1}$, $c_{2}$, $c_{4}$ and $c_{5}$ in terms of $c_{3}$ (the sign of $c_1$ is fixed by (\ref{eq:MPcoeff9})). 

The coefficients $c_{7}$ and $c_{8}$ can be fixed by the minor which results from cancelling the second row and the first column
\begin{equation*}
\frac{2L_{1}}{π} z^{3} + \frac{c_{6}L_{1}}{π}z^{2} + \left(\frac{c_{7}L_{1}}{π}+\frac{2c_{2}L_{2}}{π}\right) z + \frac{L_{1}c_{8}}{π}-c_{2}c_{5}  
\sim z^{3} + bz^{2} - α^{2} z - b α^{2},
\end{equation*}
where $b$ is some constant. Again the ratios of the coefficients for the linear over the cubic and the constant over the quadratic term give
\begin{align}
c_{7} & = - 2 α^{2} - \frac{2L_{2}}{L_{1}}c_{2}, \label{eq:MPcoeff5} \\
c_{8} & = -α^{2} c_{6} + \frac{π}{L_{1}} c_{2}c_{5}. \label{eq:MPcoeff6}
\end{align}
The last coefficient that remains undetermined is $c_{6}$, but the determinant is going to help us for this. The requirement $\det P = 1$ implies
\begin{align*}
\left(z^{2} -α^{2}\right)^{3} & = z^{6} + \left(\frac{1}{2}c_{6} - 2c_{1}-c_{3}\right) z^{5} \\
							& \hspace{0.4cm} + \left(2c_{1}c_{3}-c_{1}c_{6}-c_{4}-\frac{1}{2}c_{3}c_{6}
+\frac{1}{2}c_{7}\right)z^{4} + …
\end{align*}
The quintic term immediately gives the desired expression
\begin{equation}\label{eq:MPcoeff7}
c_{6} = 4c_{1}+2c_{3}. 
\end{equation}
Exploiting furthermore the quartic term we get 
\begin{equation*}
-3α^{2} = 2 c_{1}c_{3}- c_{1}c_{6} - c_{4} - \frac{1}{2}c_{3}c_{6}+\frac{1}{2}c_{7},
\end{equation*}
which, by using the above obtained relations, is equivalent to
\begin{equation} \label{eq:MPcoeff9}
α^{2} = 4c_{1}^{2} + 4 c_{1}^{\vphantom{1}}c_{3}^{\vphantom{1}} + c_{3}^{2} + \frac{L_{2}}{L_{1}}c_{2}^{\vphantom{1}}.
\end{equation}
Let us relabel the parameters in accordance with \cite{Harmark:2004rm} as follows
\begin{equation*}
c_{3}^{\vphantom{1}} = \frac{1}{2} ρ_{0}^{2}, \quad L_{1}^{\vphantom{1}} = \frac{π}{4} a_{1}^{\vphantom{2}}ρ_{0}^{2},  \quad L_{2}^{\vphantom{1}} 
= \frac{π}{4} a_{2}^{\vphantom{2}}ρ_{0}^{2}.
\end{equation*}
Note that from the asymptotic patching matrix we see that $c_{3}$ is proportional to the mass which justifies the implicit assumption about its 
positiveness in the above definition. However, the parameters $(ρ_{0},a_{1},a_{2})$ are not unconstrained as we will see soon.

By \eqref{eq:MPcoeff1} we have
\begin{equation} \label{eq:MPcoeff10}
c_{2}^{\vphantom{1}} = - \frac{1}{4} a_{1}^{\vphantom{1}}a_{2}^{\vphantom{1}}ρ_{0}^{2}.
\end{equation}
Equations~\eqref{eq:MPcoeff4}, \eqref{eq:MPcoeff9}, \eqref{eq:MPcoeff10} imply
\begin{equation*}
\frac{L_{1}}{L_{2}} c_{2}^{\vphantom{1}} = 4 c_{1}^{\vphantom{1}} c_{3}^{\vphantom{1}} + c_{3}^{2} + \frac{L_{2}}{L_{1}} c_{2}^{\vphantom{1}} 
\quad ⇒ \quad c_{1}^{\vphantom{1}} = - \frac{1}{8} \left(ρ_{0}^{2}+a_{1}^{2}-a_{2}^{2}\right).
\end{equation*}
Moreover, from \eqref{eq:MPcoeff3} and \eqref{eq:MPcoeff7} we obtain
\begin{equation*}
c_{5}^{\vphantom{1}} = \frac{1}{8} a_{2}^{\vphantom{1}} ρ_{0}^{2} \left(ρ_{0}^{2}+a_{1}^{2}-a_{2}^{2}\right) \quad \text{and} \quad c_{6}^{\vphantom{1}} 
= \frac{1}{2} \left(ρ_{0}^{2}-a_{1}^{2}+a_{2}^{2}\right).
\end{equation*}
Continuing with \eqref{eq:MPcoeff2} yields
\begin{equation*}
c_{4}^{\vphantom{1}} = α^{2}-\frac{1}{8} ρ_{0}^{2}\left(ρ_{0}^{2}+a_{1}^{2}-a_{2}^{2}\right),
\end{equation*}
and \eqref{eq:MPcoeff5} and \eqref{eq:MPcoeff6} give 
\begin{align*}
c_{7}^{\vphantom{1}} & = -2α^{2} + \frac{1}{2} a_{2}^{2} ρ_{0}^{2}, \\
c_{8}^{\vphantom{1}} & = \frac{1}{2} α^{2} \left(-ρ_{0}^{2}+a_{1}^{2}-a_{2}^{2}\right) - \frac{1}{8} a_{2}^{2}ρ_{0}^{2}\left(ρ_{0}^{2}+a_{1}^{2}-a_{2}^{2}\right).
\end{align*}
With these parameters being determined and with the help of \eqref{eq:MPcoeff9} we can write $α$ explicitly as 
\begin{equation} \label{eq:MPalpha}
α^{2} = \frac{1}{16}\left(ρ_{0}^{2}-a_{1}^{2}-a_{2}^{2}\right)^{2}-\frac{1}{4}a_{1}^{2}a_{2}^{2}.
\end{equation}
Comparing those expressions with \eqref{eq:MPpatmat} one will find that they coincide. However, note that
\begin{equation*}
16α^{2} = ρ_{0}^{4}-2ρ_{0}^{2}\left(a_{1}^{2}+a_{2}^{2}\right)+\left(a_{1}^{2}-a_{2}^{2}\right)^{2},
\end{equation*}
which implies that for real non-zero $α$ we need the left hand side to be positive and therefore we need either $ρ_{0}^{2}>\left(|a_{1}^{\vphantom{1}}|+
|a_{2^{\vphantom{1}}}|\right)^{2}$, a condition on the asymptotic quantities familiar from the discussion of the Myers-Perry solution in 
\cite{Emparan:2008aa} and \cite{Myers:2011yc}, or $0<ρ_{0}^{2}<\left(|a_{1}^{\vphantom{1}}|-|a_{2}^{\vphantom{1}}|\right)^{2}$. 
This latter possibility is ruled out by (\ref{eq:MPcoeff0}): we want $\alpha>2c_1$ while (\ref{eq:MPcoeff4}) gives $\alpha^2\leq 4c_1^2$, so we must have $c_1<0$ or
\begin{equation*}
\rho_{0}^{2}\geq a_1^2-a_2^2;
\end{equation*}
from the bottom rod we must obtain this condition with $a_1$ and $a_2$ interchanged, so that we require
\begin{equation}
\rho_{0}^{2}\geq |a_1^2-a_2^2|,
\end{equation}
and this is incompatible with $0<ρ_{0}^{2}<\left(|a_{1}^{\vphantom{1}}|-|a_{2}^{\vphantom{1}}|\right)^{2}$ and nondegeneracy.


Mass and angular momenta form a set of three parameters and the position of the nuts can be expressed in terms of these three parameters. 
This is more than one would have expected just from Theorem~\ref{thm:holluniqueness}. However, we stated already that by 
\cite[Prop.~3.1]{Chrusciel:2011eu} the mass is redundant in the set of parameters. Here we did not eliminate the mass, but rather the 
rod length. If one instead replaces $M$ by $α$ in the set of parameters, one obtains an equation for $M$: by rearranging \eqref{eq:MPalpha} 
one seeks positive $c_{3}$ which satisfy a $6^{\mathrm{th}}$ order polynomial. With no 
further conditions on $(α>0, L_{1}^{\vphantom{1}},L_{2}^{\vphantom{1}})$ there are again two positive solutions for $c_{3}$ 
(unless $L_{1}^{2}=L_{2}^{2}$ when there is only one) but once again one branch is ruled out by (\ref{eq:MPcoeff1}).

Some of the steps above, when we determined all the parameters in the patching matrix, required $L_{1}L_{2}≠0$. Assuming that one 
of the angular momenta vanishes leads to dichotomies at certain steps when solving for the $c_i$. Some of the branches in this 
tree of possibilities lead to contradictions while others lead to valid solutions such as the Myers-Perry solution with one vanishing angular momentum or 
an ultrastatic solution, that is where $g_{tt}=1$, $g_{ti}=0$ (which must violate one of the global conditions as the mass is zero). On the other hand at no point did we use the fact that the middle rod is a horizon.

Note also that issues of conicality cannot arise here as the periodicities of $\phi,\psi$ are chosen to be $2\pi$ on the outer parts of 
the axis and no further spatial rods are left. When we turn to a larger numbers of nuts there could be conical singularities.\\ \mbox{} \hfill $\blacksquare$
\end{ex}
Moving on to a rod structure with three nuts, we will consider the simpler case where one of the Killing vectors 
is hypersurface-orthogonal.  
\begin{ex}[Three-Nut Rod Structure with one Hypersurface-Orthogonal Killing Vector]\mbox{}\\
We consider the rod structure as in Figure~\ref{fig:threenutrodstr}. 
\begin{figure}[htbp]
\begin{center}
     \scalebox{0.8}{\input{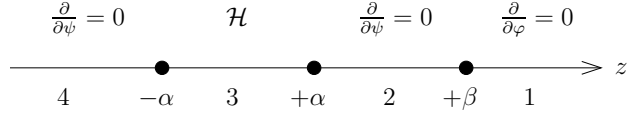}}
     \caption{Rod structure with three nuts, where $α,β>0$, and $S^{2}×S^{1}$ horizon.} 
     \label{fig:threenutrodstr}
\end{center}
\end{figure}
Together with $L_{1}=L\neq 0$, $L_{2}=0$ this comprises our twistor data. In order to simplify the calculations we would like to make assumptions 
such that the two non-diagonal entries in the third row and column of the patching matrix vanish (when adapted to $(β,∞)$). One therefore needs 
$g_{t{\psi}}=g_{{\varphi}{\psi}}=0$. This cannot be concluded from $L_{2}=0$, as the Black Saturn shows (see \cite{Elvang:2007rd}). We thus make the assumption that $\partial_{\psi}$ is 
hypersurface-orthogonal, that is $Ψ∧\drm Ψ=0$, so that $g_{t{\psi}}=g_{{\varphi}{\psi}}=0$, and $χ_{2}=0$. 

These assumptions turn our ansatz into
\begin{equation*}
\renewcommand{\arraystretch}{1.5}
P = \frac{1}{Δ}
\left( \begin{array}{ccc}
q(z) & l(z) & 0 \\
l(z) & c(z) & 0 \\
0 & 0 & Q(z)
\end{array}\right),
\end{equation*}
where
\begin{equation*}
\renewcommand{\arraystretch}{2}
\begin{array}{rcl}
Δ (z)& = & (z+α)(z-α)(z-β), \\
q(z) & = & \dfrac{1}{2} z^{2} + c_{1}z+c_{2}, \\
l(z) & = &\dfrac{L}{π} z + c_{3}, \\
c(z) & = &-z^{3} + c_{4} z^{2} + c_{5} z +c_{6}, \\
Q(z) & = & 2 z^{4} + c_{7} z^{3} + c_{8} z^{2} + c_{9} z + c_{10}.
\end{array}
\end{equation*}
Corollary~\ref{cor:invpmatrix} gives the following conditions
\begin{equation} \label{eq:BRcoeff1}
\renewcommand{\arraystretch}{1.4}
\begin{array}{rcrrl}
qc-l^{2} & = & \tilde q_{1}Δ,\qquad & \tilde q_{1} & \hspace{-0.2cm} \text{ quadratic},\\
Qq^{\hphantom{2}} & = & \tilde c_{1}Δ,\qquad & \tilde c_{1} & \hspace{-0.2cm} \text{ cubic},\\
Ql^{\hphantom{2}} & = & \tilde q_{2}Δ,\qquad & \tilde q_{2} & \hspace{-0.2cm} \text{ quadratic},\\
Qc^{\hphantom{2}} & = & \tilde Q_{1}Δ,\qquad & \tilde Q_{1} & \hspace{-0.2cm} \text{ quartic}.
\end{array}
\end{equation}
The condition for the patching matrix to have unit determinant then implies
\begin{equation}\label{eq:BRcoeff2}
Δ^{3} = Q (qc-l^{2}) = Q \tilde q_{1} Δ \quad ⇔ \quad Δ^{2} = Q \tilde q_{1}. 
\end{equation}
Now, as $\tilde q_{1}$ is a quadratic, there are six possibilities for it to be a product of $(z+α)$, $(z-α)$ and $(z-β)$. But 
$∂_{ψ}=0$ on $(α,β)$, thus $Q/Δ→0$ for $z \downarrow β$. To guarantee this $(z-β)^{2}$ has to divide $Q$, which rules out three 
of those six possibilities. Furthermore, by Corollary~\ref{cor:invpmatrix} we have 
\begin{equation*}
\frac{\tilde q_{1}}{Δ \vphantom{\skew{7}{\tilde}{A}_{4}}}=\frac{1}{\det \skew{7}{\tilde}{A}_{4}} \quad \text{on } (-∞,-α),
\end{equation*}
where $\skew{7}{\tilde}{A}_{4}$ is obtained from $J$ by cancelling the rows and columns containing inner products with $∂_{ψ}$. But 
from the general theory we know that the entry of $P$ with the inverse determinant contains a simple pole when approaching the nut, 
that is $z \uparrow -α$, so that $\tilde q_{1}(-α) ≠ 0$. This immediately yields
\begin{equation*}
\tilde q_{1} = \frac{1}{2} (z-α)^{2} \quad \text{and by \eqref{eq:BRcoeff2} also} \quad Q = 2 (z+α)^{2}(z-β)^{2}.
\end{equation*}
Now observe that there is a factor of $(z-α)$ in $Δ$ but not in $Q$, so that by \eqref{eq:BRcoeff1} the monic $(z-α)$ has to divide 
$l$, $q$ and $c$. We write this as
\begin{equation*}
l = \frac{L}{π} (z-α), \quad q = -\frac{1}{2} (z-α)\, \tilde l_{1}, \quad c = - (z-α)\, \tilde q_{3},
\end{equation*}
where
\begin{equation*}
\tilde l_{1} = z+A, \quad \tilde q_{3} = z^{2} + Bz + C \qquad \text{for } A, B, C = \text{const.}
\end{equation*}
The first equation in \eqref{eq:BRcoeff1} then turns into
\begin{equation*} 
\renewcommand{\arraystretch}{2}
\begin{array}{crcl}
& \tilde l_{1} \tilde q_{3} - \dfrac{2L^{2}}{π^{2}} & = & Δ \\
⇔ & z^{3} + (A+B) z^{2} + (C+AB) z + AC-\dfrac{2L^{2}}{π^{2}} & = & z^{3}-βz^{2}-α^{2}z+α^{2}β.
\end{array}
\end{equation*}
Comparing the coefficients one sees
\begin{equation*}
B = -A-β, \qquad C+AB=-α^{2}, \qquad AC-\frac{2L^{2}}{π^{2}} = α^{2}β,
\end{equation*}
and therefore $A$ satisfies
\begin{equation*}
\renewcommand{\arraystretch}{2}
\begin{array}{crcl}
& \dfrac{1}{A} \left(α^{2}β+\dfrac{2L^{2}}{π^{2}}\right) - A(A+β) & = & - α^{2} \\
⇔ & A^{3} + βA^{2}-α^{2}A-α^{2}β - \dfrac{2L^{2}}{π^{2}} & = & 0.
\end{array}
\end{equation*}
Writing $F(a) \coloneqq a^{3} + βa^{2}-α^{2}a-α^{2}β - \dfrac{2L^{2}}{π^{2}}$, we see that since $F(0)<0$, this last polynomial has to 
have at least one (positive) real root which we'll call $A$ (see Figure~\ref{fig:poly}). 
\begin{figure}[htbp]
\begin{center}
     \scalebox{0.9}{\input{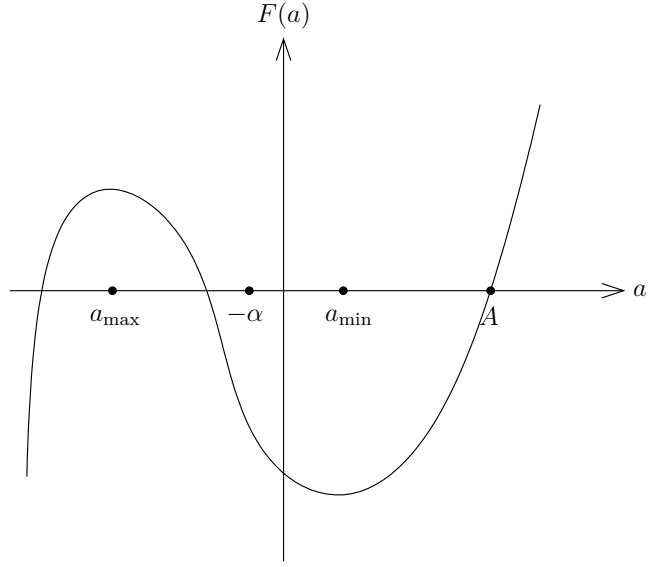}}
     \caption{The cubic $F(a)$.} 
     \label{fig:poly}
\end{center}
\end{figure}
Now from $F'(A)=3A^{2}+2βA-α^{2}$ one concludes that the local maximum of $F$ is at
\begin{equation*}
a_{\mathrm{max}} = -\frac{1}{3} \left(β+\sqrt{β^{2}+3α^{2}}\right). 
\end{equation*}
Furthermore, note that since $α≤β$, we have
\begin{align*}
a_{\mathrm{max}} & ≤ -\frac{1}{3} \left(α+\sqrt{α^{2}+3α^{2}}\right) = -α \quad \text{and} \\
F(-α) & = -\dfrac{2L^{2}}{π^{2}}<0,
\end{align*}
which implies that if $F$ has two more real roots, they will both be smaller than $-α$. On the other hand there is a constraint on 
$A$ obtained from the asymptotics. In our patching matrix the central entry is
\begin{equation*}
\frac{c}{Δ} = - \frac{(z-α)(z^{2}+Bz+C)}{Δ} = - 1 + (α-β-B) z^{-1} + …
\end{equation*}
Using \eqref{eq:asymptPtop} and the relation between $A$ and $B,$ this gives
\begin{equation*}
A+α = \frac{4M}{3π}.
\end{equation*}
Positivity of $M$ thus implies $A>-α$ and we therefore have shown that there is a unique positive $A∈ℝ$ which satisfies all the constraints.

Consequently, by our ansatz we are able to fix all the parameters in terms of $α$, $β$, $L$, that is in terms of the given rod and asymptotic data, and the patching matrix is 
\begin{equation*} 
\renewcommand\arraystretch{2.5}
P_{1}=\left(\begin{array}{ccc}-\dfrac{z+A}{2(z+α)(z-β)} & \dfrac{L}{π(z+α)(z-β)} & 0 \\
\cdot  & -\dfrac{z^{2}-\skew{2}{\tilde}{γ} z+\skew{3}{\tilde}{δ}}{(z+α)(z-β)} & 0 \\
0 & 0 & \dfrac{2(z+α)(z-β)}{z-α}\end{array}\right),
\end{equation*}
where
\begin{equation*}
\skew{2}{\tilde}{γ} = β+A, \quad \skew{3}{\tilde}{δ}=-α^{2}+βA +A^{2}.
\end{equation*}
Now compare this with \eqref{eq:BRpatmat}: since $λ$ and $A$ are zeros of the same polynomial and are restricted by the same inequality involving the mass, they are equal and we have derived the patching matrix 
for the black ring with the conical singularity not yet removed. (We are grateful to Harvey Reall for suggesting this possibility.)  \\ \mbox{}\hfill $\blacksquare$

For the regular black ring, removing the 
conical singularity gives the angular momentum $L$ in terms of $α$ and $β$. In this formalism, removing the conical singularity requires more work which we turn to next.
\end{ex}

\subsection{Local Behaviour of $J$ around a Nut} \label{subsec:Jaroundnut}

For the case of a rod structure with three nuts and $L_1L_2\neq 0$, and generally as the number of nuts gets higher, one needs more constraints and these will come from the inner rods. It is therefore important to have an understanding of how the patching matrices with adaptations to adjacent rods are related to each other. We have seen an 
example in Theorem~6.5 in Part~I, which can be considered as such a switch at the nut at infinity. The proof gives an idea of what is happening 
when changing the adaptation, yet it will be more difficult for interior nuts, that is nuts for which $|a_{i}|$ is finite. 

A strategy of how to achieve this is described in \cite[Ch.~3]{Fletcher:1990aa}. There the essence is that ``... redefining the sphere $S_{0}$ and $S_{1}$ 
by interchanging double points alters the part of the real axis to which the bundle is adapted.'' \cite[Sec.~3.2]{Fletcher:1990aa}. However, as the example 
in \cite[Sec.~5.1]{Fletcher:1990aa} shows, this comes down to a Riemann-Hilbert problem which will be rather hard and impractical to solve in five or even 
higher dimensions. Thus we will approach this task in a different way. The idea is that we start off as above on the outermost rods where $|z| \to \infty $, 
determine as many free parameters as possible by the constraints which we have got on these rods, then take the resulting $P$-matrix (still having free 
parameters in it which we would like to pin down), calculate its adaptation to the next neighbouring rod and apply analogous constraints there. But before 
looking at the patching matrix itself let us first study how $J$ behaves locally around a nut.

Consider first a nut where two spatial rods meet, that is like in Figure~\ref{fig:sprods}.\vspace{0.3cm}
\begin{figure}[htbp]
\begin{center}
     \scalebox{0.8}{\input{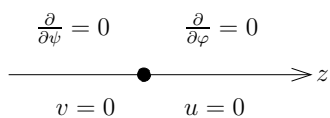}}
     \caption{Two spatial rods with their rod vectors meeting at a nut.} 
     \label{fig:sprods}
\end{center}
\end{figure}

Without loss of generality assume that the nut is at $z=0$. In this case a suitable choice of coordinates are the $(u,v)$-coordinates defined as 
\begin{equation*} 
r=uv, \  z=\frac{1}{2}(v^{2}-u^{2}) \quad \Leftrightarrow  \quad u^{2} = -z\pm \sqrt{r^{2}+z^{2}},\ v^{2} = z\pm \sqrt{r^{2}+z^{2}},
\end{equation*}
where the signs on the right-hand side are either both plus or both minus. If we choose both signs to be plus, then the rod $∂_{\varphi}=0$ corresponds 
to $u=0$ and $∂_{\psi}$ to $v=0$. The metric in the most general case has the form
\begin{equation} \label{eq:GCmetric} 
\begin{split}
\drm s^{2} & = X \,\drm t^{2} + 2Y \,\drm t \drm {\varphi} + 2Z \,\drm t \drm {\psi} + U \,\drm {\varphi}^{2} + 2V \,\drm {\varphi} \drm {\psi} + W \,\drm {\psi}^{2}\\
& \hphantom{=} + {\erm}^{2ν} (u^{2}+v^{2})(\drm u^{2}+\,\drm v^{2}),
\end{split}
\end{equation}
or equivalently
\begin{equation*}
\renewcommand{\arraystretch}{1.4}
J (u,v)=\left(\begin{array}{ccc}X & Y & Z \\\cdot  & U & V \\\cdot  & \cdot  & W\end{array}\right).
\end{equation*}
We assume that $\phi,\psi$ have period $2\pi$.

\begin{thm}\label{thm:Jaroundnut}
For a space-time regular on the axis the generic form of $J$ in $(u,v)$-coordinates around a nut, where two spacelike rods meet, is
\begin{equation} \label{eq:Jaroundnut}
\renewcommand{\arraystretch}{1.4}
J= \left(\begin{array}{ccr}
X_{0} & u^{2}Y_{0} & v^{2}Z_{0} \\
\cdot  & u^{2}U_{0} & u^{2}v^{2} V_{0} \\
\cdot  & \cdot  & v^{2}W_{0}\end{array}\right),
\end{equation}
and, furthermore, one needs
\begin{itemize}
\item $\dfrac{U_{0}^{\hphantom{1}}}{v^{2}_{\hphantom{1}}\erm^{2\nu}}=1$ as a function of $v$ on $u=0$,
\item $\dfrac{W_{0}^{\hphantom{1}}}{u^{2}_{\hphantom{1}}\erm^{2\nu}}=1$ as a function of $u$ on $v=0$.
\end{itemize}

If one of the rods is the horizon instead of a spacelike rod corresponding statements hold.
\end{thm}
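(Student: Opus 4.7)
The plan is to combine two independent inputs: regularity of the spacetime at the axis, which via the rotational Killing symmetries forces $J$ to arise from a smooth Riemannian metric in a neighbourhood of the fixed set of each rod's Killing vector; and the matching of the conformal two-dimensional factor $\erm^{2\nu}(\drm r^{2}+\drm z^{2})=\erm^{2\nu}(u^{2}+v^{2})(\drm u^{2}+\drm v^{2})$ together with the $2\pi$-periodicities of $\varphi$ and $\psi$, which supplies the conicality normalisations.

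First I would work near the rod $u=0$ on which $\partial_{\varphi}$ vanishes. Introducing local Cartesian coordinates on the transverse two-plane via $x+\irm y=u\,\erm^{\irm \varphi}$, so that $\partial_{\varphi}=-y\,\partial_{x}+x\,\partial_{y}$, smoothness of the metric together with $\partial_{\varphi}$-invariance implies that any $\varphi$-invariant one-form in $(x,y)$ is a smooth combination of $x\,\drm x+y\,\drm y=u\,\drm u$ and $x\,\drm y-y\,\drm x=u^{2}\,\drm\varphi$, and any $\varphi$-invariant symmetric two-tensor is a smooth combination of these together with $\drm x^{2}+\drm y^{2}=\drm u^{2}+u^{2}\,\drm\varphi^{2}$. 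Reading off the $\drm\varphi$- and $\drm\varphi^{2}$-coefficients yields $Y=u^{2}Y_{0}$, $V=u^{2}V_{0}^{(1)}$ and $U=u^{2}U_{0}$ with $Y_{0}, V_{0}^{(1)}, U_{0}$ smooth near $u=0$. The mirror argument at $v=0$ (with $x'+\irm y'=v\,\erm^{\irm\psi}$) gives $Z=v^{2}Z_{0}$, $V=v^{2}V_{0}^{(2)}$ and $W=v^{2}W_{0}$. Since $J$ is analytic across the nut in $(u,v)$ and $V$ vanishes on both rods, one concludes $V=u^{2}v^{2}V_{0}$, and the form \eqref{eq:Jaroundnut} is established.

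Next, along a slice with $v,t,\psi$ constant, the $(u,\varphi)$-block of the metric is
\[
u^{2}U_{0}\,\drm\varphi^{2}+\erm^{2\nu}(u^{2}+v^{2})\,\drm u^{2},
\]
with leading behaviour $u^{2}\,U_{0}|_{u=0}\,\drm\varphi^{2}+v^{2}\,\erm^{2\nu}|_{u=0}\,\drm u^{2}$ near $u=0$. Passing to the geodesic radius $R=u\bigl(v^{2}\erm^{2\nu}\bigr)^{1/2}|_{u=0}$ one gets $\drm R^{2}+R^{2}\,\bigl(U_{0}/(v^{2}\erm^{2\nu})\bigr)|_{u=0}\,\drm\varphi^{2}+O(R^{3})$, so absence of a conical singularity with $\varphi\in[0,2\pi)$ forces $U_{0}/(v^{2}\erm^{2\nu})=1$ on $u=0$ identically in $v$, and the mirror argument at $v=0$ gives $W_{0}/(u^{2}\erm^{2\nu})=1$. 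When one of the rods is a nondegenerate horizon, the same argument goes through with $\partial_{\varphi}$ (or $\partial_{\psi}$) replaced by the rod vector $\xi=\partial_{t}+\Omega_{1}\partial_{\varphi}+\Omega_{2}\partial_{\psi}$ which vanishes on the bifurcation surface, the conicality normalisation being replaced by smoothness of the bifurcate Killing horizon (in particular by the surface gravity being the one compatible with the Killing parameter on the other rod).

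The main obstacle I expect is the joint-analyticity step: knowing separately that $V$ vanishes to order $u^{2}$ at $u=0$ and to order $v^{2}$ at $v=0$ does not immediately give $V=u^{2}v^{2}V_{0}$, and one must exploit the two rotational symmetries simultaneously together with analyticity of $J$ across the nut. This is the point at which the assumption of spacetime regularity at the nut itself, rather than along each rod separately, is essential; the horizon variant also demands care because the rod vector is only defined up to multiplicative rescaling, and one has to fix that rescaling consistently with the adjacent spacelike rod before the normalisation statement even makes sense.
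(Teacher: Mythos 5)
Your proposal is correct and follows essentially the same route as the paper: both pass to Cartesian coordinates transverse to the axis adapted to the vanishing Killing vector, extract the orders of vanishing of $Y,Z,U,V,W$ from smoothness of the metric there (the paper phrases this as cancellation of the fourth- and second-order poles in the Cartesian components, you phrase it via the invariant-tensor decomposition and the cone-angle computation, which amount to the same conditions), and treat the horizon case by the hyperbolic substitution for the degenerating timelike direction. The joint step for $V$ that you flag as a worry is handled in the paper exactly as you suggest, by first writing $V=u^{2}V_{1}$ with $V_{1}$ bounded and then applying the $v\to 0$ argument to $V_{1}$.
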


The second part of the theorem is closely tied to the problem of conicality, which we will investigate shortly.

\begin{proof}
Introduce Cartesian coordinates
\begin{equation} \label{eq:cartaroundnut}
x=u\cos\phi,\quad y=u\sin\phi,\quad z=v\cos\psi,\quad w=v\sin\psi,
\end{equation}
then the metric becomes in these coordinates
\begin{equation} \label{eq:cartesmet} 
\begin{split}
\drm s^{2} & = X\, \drm t^2+2\frac{Y}{u^2}\, \drm t(x\, \drm y-y\, \drm x)+2\frac{Z}{v^2}\, \drm t(z\, \drm w-w\, \drm z) +\frac{U}{u^4}(x\, \drm y-y\, \drm x)^2\\
& \hspace{0.4cm}+2\frac{V}{u^2v^2}(x\, \drm y-y\, \drm x)(z\, \drm w-w\, \drm z)+\frac{W}{v^4}(z\, \drm w-w\, \drm z)^2 \\
& \hspace{0.4cm} + \erm^{2\nu}(u^2+v^2)\left(\frac{1}{u^2}(x\,\drm x+y\, \drm y)^2+\frac{1}{v^2}(z\, \drm z+w\, \drm w)^2\right).
\end{split}
\end{equation}
The $x$, $y$, $z$, $w$ are not to be confused with the earlier use of the same symbols. Set $X_{0}=X$. Now as $u→0$ for constant $v$ we immediately 
see that in order for $g_{ty}$ and $g_{xw}$ to be bounded we need $Y=u^{2}Y_{0}$ and $V=u^{2}V_{1}$ for bounded $Y_{0}$, $V_{1}$. The remaining singular terms are
\begin{equation*}
\frac{U}{u^4}(x\, \drm y-y\, \drm x)^2+ \erm^{2\nu}(u^2+v^2)\frac{1}{u^2}(x\, \drm x+y\, \drm y)^2.
\end{equation*}
For the fourth-order pole not to be dominant we need $U=u^{2}U_{0}$ for bounded $U_0$; then it is required
\begin{equation} \label{eq:conf1}
\frac{U_{0}}{v^{2}\erm^{2\nu}}=1 \quad \text{as a function of } v \text{ on } u=0
\end{equation}
to remove the remaining second-order pole.

Repeating this for $v→0$ with fixed $u$ yields $Z=v^{2}Z_{0}$, $V_{1}=v^{2}V_{0}$, $W=v^{2}W_{0}$ and 
\begin{equation*}
\frac{W_{0}}{u^{2}\erm^{2\nu}}=1 \quad \text{as a function of } u \text{ on } v=0.
\end{equation*}
This is the minimum that we can demand in terms of regularity of $J$ on the axis and near the nuts.

Assuming now without loss of generality that in Figure~\ref{fig:sprods} the axis segment where $v=0$ is the horizon, we have seen in Section~5 in Part~I 
that then the first row and first column degenerate. So, we substitute
\begin{equation*}
z=v\cosh(ωt),\quad w=v\sinh (ωt),
\end{equation*}
where $ω$ is a constant with no further restriction. The coordinates $x$ and $y$ choose as in \eqref{eq:cartaroundnut}. Now the above argument works analogously 
with all results equivalent, but
\begin{equation*}
\frac{X_{0}}{v^{2}\erm^{2\nu}}=-ω^{2} \quad \text{as a function of } v \text{ on } u=0.
\end{equation*}
\end{proof}

\subsection{Conicality and the Conformal Factor}
Returning to the case as depicted in Figure~\ref{fig:sprods}, we saw in \eqref{eq:conf1} that regularity at an axis seqment where $\partial_{\phi}$ vanishes forces 
a relation between $g_{\phi\phi}$ and the conformal factor $\erm^{2\nu}$ of the $(r,z)$-metric. In this section we first establish the following.

\begin{prop} \label{prop:conf1}
On a segment of the axis where $u=0$ we have $\frac{U_0}{v^2\erm^{2\nu}}=\text{constant}$.  
\end{prop}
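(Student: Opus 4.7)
The plan is to use the vacuum Einstein equations, which for 5-dimensional stationary axisymmetric spacetimes in $\sigma$-model form split into a matrix equation for $J$ and a pair of compatibility equations determining $\partial_r \nu$ and $\partial_z \nu$ as quadratic in $J^{-1}\partial J$ with an overall factor of $r$. The target identity $\partial_z \log\bigl(U_0/(v^2 e^{2\nu})\bigr) = 0$ on the rod will follow from evaluating the equation for $\partial_z \nu$ in the limit $r \to 0$ and rearranging.

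First I would combine Theorem~\ref{thm:Jaroundnut} with the constraint $\det J = -r^2 = -u^2 v^2$ to pin down the leading behaviour of $J^{-1}$ near $u=0$. At leading order, the determinant condition forces the relation $U_0(X_0 W_0 - v^2 Z_0^2) = -1$, so that $(J^{-1})_{\varphi\varphi}$ develops a simple pole $1/(u^2 U_0)$, while every other entry of $J^{-1}$ remains bounded on approach to the rod.

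Next I would insert this expansion into the right-hand side of the reduced equation $\partial_z \nu = \tfrac{r}{4}\,\mathrm{tr}\bigl(J^{-1}\partial_r J \cdot J^{-1}\partial_z J\bigr)$ and take $r \to 0$. Because of the pole in $(J^{-1})_{\varphi\varphi}$ combined with $(\partial_r J)_{\varphi\varphi} \sim 2r\,U_0/v^2$ and $(\partial_z J)_{\varphi\varphi} \sim r^2\,\partial_z(U_0/v^2)$, the $(\varphi,\varphi)$ entry of the matrix product behaves as $\partial_z \log(U_0/v^2)/r$, whereas all other entries stay bounded. After multiplication by the overall prefactor of $r$, only the $(\varphi,\varphi)$ piece survives in the limit, giving
\[
\partial_z \nu\big|_{r=0} = \tfrac12\,\partial_z \log(U_0/v^2).
\]
Since $v^2 = 2z$ on the rod, this rearranges directly to $\partial_z \log\bigl(U_0/(v^2 e^{2\nu})\bigr) = 0$, which is the claim.

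The main obstacle is rigorously justifying that the non-$(\varphi,\varphi)$ contributions to the trace do stay bounded as $u \to 0$, so that they disappear after being weighted by $r$. This requires tracking the $\mathcal{O}(u^2)$ corrections to the entries $X_0, Y_0, U_0, V_0, Z_0, W_0$ of Theorem~\ref{thm:Jaroundnut}, using the subleading terms of $\det J = -r^2$ to control the off-diagonal entries of $J^{-1}$, and checking that no further $1/r$ singularity escapes from cross-terms involving the other Killing directions.
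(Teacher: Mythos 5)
Your argument is correct and is essentially the paper's own proof: both evaluate the conformal-factor field equation on the axis using the local form of $J$ from Theorem~\ref{thm:Jaroundnut} together with the normalisation $\det J=-r^{2}=-u^{2}v^{2}$ (which forces $(J^{-1})_{\varphi\varphi}\sim 1/(u^{2}U_{0})$ while keeping the other entries of $J^{-1}$ bounded), the only difference being that the paper works with the complexified coordinate $\chi=u+\irm v$ and the $(u,v)$-conformal factor $(u^{2}+v^{2})\erm^{2\nu}$, whereas you use the real component $\partial_{z}\nu=\tfrac{r}{4}\tr\left(J^{-1}J_{r}J^{-1}J_{z}\right)$ directly. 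The verification you flag as the main obstacle is immediate from the same data and is not a genuine gap: the entire $\varphi$-row of $J$ is $\mathcal{O}(u^{2})$, so its $r$- and $z$-derivatives are $\mathcal{O}(r)$ and $\mathcal{O}(r^{2})$ respectively, while every entry of $J^{-1}$ other than $(J^{-1})_{\varphi\varphi}$ has cofactor divisible by $u^{2}v^{2}$ and stays bounded, so all remaining terms of the trace are at worst $\mathcal{O}(1)$ and are annihilated by the prefactor $r$ --- exactly the bookkeeping behind the paper's expansion $K_{1}/u^{2}+K_{2}/u+\mathcal{O}(1)$.
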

\begin{proof}
To prove this we need to consider how the conformal factor varies on the axis and this is obtained from the second part of the Einstein field equations
\begin{equation} \label{eq:conf2}
\partial_{\xi}\left(\log \left(r\erm^{2\nu}\right)\right)=\frac{\irm r}{2} \tr\left(J^{-1}J_{\xi} J^{-1}J_{\xi}\right).
\end{equation}
It will be convenient to work with $\chi=u+\irm v$ where $\xi=z+\irm r=\frac{1}{2}\chi^{2}$ and concentrate on the conformal factor of the $(u,v)$-metric which 
is $(u^{2}+v^{2})\erm^{2\nu}$ by \eqref{eq:cartesmet}. Then
\begin{equation*} 
\begin{split}
\partial_{\chi}\left(\log\left(\left(u^{2}+v^{2}\right)\erm^{2\nu}\right)\right) & 
=\partial_{\chi}\left(\log\left(\left(u^{2}+v^{2}\right)(uv)^{-1}r\erm^{2\nu}\right)\right) \\
& =\frac{1}{\chi}-\frac{1}{2u}+\frac{\irm}{2v}+\frac{\irm uv}{2(u+\irm v)}\tr \left(J^{-1}J_{\chi} J^{-1}J_{\chi}\right).
\end{split}
\end{equation*}

Close to the axis segment $u=0$ we substitute from \eqref{eq:Jaroundnut} and expand in powers of $u$ to find
\begin{equation} \label{eq:conf3} 
\partial_{\chi}\left(\log\left(\left(u^{2}+v^{2})\erm^{2\nu}\right)\right)\right)=\frac{1}{\chi}-\frac{1}{2u}+\frac{\irm}{2v}
+\frac{\irm uv}{2(u+\irm v)}\left(\frac{K_{1}}{u^{2}}+\frac{K_{2}}{u}+\mathcal{O}(1)\right),
\end{equation}
where
\begin{align*}
K_{1} & =\left(U_{0}\left(X_{0}W_{0}-v^{2}Z_{0}^{2}\right)\right)^{2}=1+\frac{1}{v^2}\mathcal{O}\left(u^{2}\right), \\
K_{2} & =\left(U_{0}\left(X_{0}W_{0}-v^{2}Z_{0}^{2}\right)\right)^{2}\frac{\partial_{\chi} U_{0}}{U_{0}}.
\end{align*}
The right hand side of the first equation follows from the determinant
\begin{equation*}
u^{2}v^{2}=\det J = u^{2}v^{2} X_{0}U_{0}W_{0} - u^{2}v^{4}U_{0}Z_{0}^{2} + \mathcal{O}\left(u^{4}\right).
\end{equation*}
Taking in \eqref{eq:conf3} the limit on to $u=0$ we obtain just
\begin{equation*}
\partial_{v}\left(\log\left(v^{2}\erm^{2\nu}\right)\right)=\partial_{v}\log\left(U_{0}\right),
\end{equation*}
so that
\begin{equation*}
\frac{U_{0}}{v^{2}\erm^{2\nu}}=\text{constant} \quad  \text{on } u=0.
\end{equation*}
\end{proof}

Thus \eqref{eq:conf1} will hold at all points of the axis segment if it holds at one. The following proposition is an analysis similar to 
\cite[App.~H]{Harmark:2004rm}, but it is simpler and more self-contained to rederive it than translate it.
\begin{prop} \label{prop:conf2}
As a function on the axis $\{u=0\}\cup\{v=0\}$, that is as a function of one variable, the factor $\left(u^{2}+v^{2}\right)\erm^{2\nu}$ is 
continuous at the nut $u=v=0$.
\end{prop}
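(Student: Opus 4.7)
The plan is to integrate the field equation for $\Phi:=(u^{2}+v^{2})\erm^{2\nu}$ from each axis segment toward the nut through the two-dimensional interior $\{u>0,\,v>0\}$, and then show that the two axis limits necessarily agree because both equal a common interior limit.

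First I would take the equation
\begin{equation*}
\partial_{\chi}\log\Phi \;=\; \frac{1}{\chi} - \frac{1}{2u} + \frac{\irm}{2v} + \frac{\irm uv}{2\chi}\,\tr\!\left(J^{-1}J_{\chi}J^{-1}J_{\chi}\right),
\end{equation*}
established in the proof of Proposition~\ref{prop:conf1} (and valid wherever $J$ is smooth and non-degenerate), and substitute the local form~\eqref{eq:Jaroundnut} of $J$, now \emph{without} specialising to either axis. Using the determinant identity $\det J=-u^{2}v^{2}$ to expand $J^{-1}$, I would check that the leading $1/u^{2}$ and $1/v^{2}$ pieces of the trace term cancel exactly against $-\tfrac{1}{2u}$ and $\tfrac{\irm}{2v}$, and that the $1/\chi$ piece is absorbed similarly. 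What remains should be an integrand whose line integral along any path in $\{u,v>0\}$ approaching the nut converges, so that $\log\Phi$ extends continuously from the open quadrant to the origin; call the resulting value $\Phi_{*}$.

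Next I would match $\Phi_{*}$ with the two axis limits. By Proposition~\ref{prop:conf1} applied to each segment,
\begin{equation*}
\Phi\big|_{u=0} \;=\; \frac{U_{0}(0,v)}{c_{1}}, \qquad \Phi\big|_{v=0} \;=\; \frac{W_{0}(u,0)}{c_{2}},
\end{equation*}
where $c_{1},c_{2}$ are the two segment constants. Both right-hand sides are smooth in the remaining variable by Theorem~\ref{thm:Jaroundnut}, so the axis limits at the nut exist and equal $U_{0}(0,0)/c_{1}$ and $W_{0}(0,0)/c_{2}$ respectively. By the continuous extension established in the previous step, each of these must coincide with $\Phi_{*}$, so in particular they coincide with one another, which is precisely the desired continuity of $\Phi$ across the nut.

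The hard part will be the cancellation step: one must verify that the singular pieces of the trace term, expanded jointly in $u$ and $v$ rather than along a single axis as in Proposition~\ref{prop:conf1}, combine with $\tfrac{1}{\chi}-\tfrac{1}{2u}+\tfrac{\irm}{2v}$ to leave an integrable remainder. The cancellations here are enforced by the determinant constraint coupling $X_{0},U_{0},V_{0},W_{0},Y_{0},Z_{0}$ at the nut. Once this bookkeeping is in place the rest of the argument is an essentially immediate continuity statement for $\log\Phi$.
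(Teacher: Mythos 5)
Your proposal is correct in outline but takes a genuinely different route from the paper's. The paper never tries to show that $\Phi=\left(u^{2}+v^{2}\right)\erm^{2\nu}$ extends continuously to the origin through the open quadrant; it passes to polar coordinates $u=R\cos\Theta$, $v=R\sin\Theta$ and computes only the \emph{angular} derivative $\partial_{\Theta}\log\Phi=\left(u\partial_{v}-v\partial_{u}\right)\log\Phi$, for which \eqref{eq:conf2} gives $-\frac{u}{v}+\frac{v}{u}-\frac{uv}{4}\tr\left(J^{-1}J_{u}J^{-1}J_{u}-J^{-1}J_{v}J^{-1}J_{v}\right)$; substituting \eqref{eq:Jaroundnut} and the determinant identity shows this is $\mathcal{O}(R)$, so the integral over the quarter-arc of radius $R$ --- which is precisely the difference of $\Phi$ between the two axis segments at distance $R$ from the nut --- tends to zero. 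Combined with the boundedness of $\Phi$ on each segment coming from Proposition~\ref{prop:conf1} (your matching step, which coincides with the paper's), that already gives the claim. Your route instead needs the \emph{full} gradient $\partial_{\chi}\log\Phi$ to lose all of its $1/R$ singularities, which is strictly stronger. The cancellation you defer does go through: the $1/u^{2}$ part of the trace combines with $-\tfrac{1}{2u}$ to give $-\tfrac{1}{2\chi}$, the $1/v^{2}$ part combines with $\tfrac{\irm}{2v}$ to give another $-\tfrac{1}{2\chi}$, and these two annihilate the $\tfrac{1}{\chi}$ term, leaving a remainder bounded by multiples of $|u/\chi|$, $|v/\chi|$ and $\partial\log U_{0}$, $\partial\log W_{0}$. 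But to see this uniformly on the quadrant you need the determinant constraint in the sharper form $U_{0}\left(X_{0}W_{0}-v^{2}Z_{0}^{2}\right)=-1+\mathcal{O}\left(u^{2}\right)$ with a uniform error, not merely the $1+\frac{1}{v^{2}}\mathcal{O}\left(u^{2}\right)$ that suffices for the paper's one-variable expansion along $u=0$. So your ``hard part'' is real and is exactly the price of proving two-dimensional continuity at the nut when only agreement of the two one-dimensional limits is required; the angular-derivative trick is what buys the paper that economy. Your argument is viable, but the cancellation must actually be carried out (and the uniformity checked) before the conclusion is earned; what you gain in exchange is the slightly stronger statement that $\Phi$ is continuous at the nut as a function on the closed quadrant, not just on the axis.
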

\begin{proof}
Near the nut introduce polar coordinates
\begin{equation*}
u=R\cos\Theta, \quad v=R\sin\Theta,
\end{equation*}
so that from \eqref{eq:conf2} we get
\begin{align*} 
\partial_{\Theta}\left(\log\left(\left(u^{2}+v^{2}\right)\erm^{2\nu}\right)\right) 
& =\left(u\partial_{v}-v\partial_{u}\right)\left(\log\left(\left(u^{2}+v^{2}\right)\erm^{2\nu}\right)\right) \\ 
& =-\frac{u}{v}+\frac{v}{u}-\frac{uv}{4}\tr\left(J^{-1}J_{u} J^{-1}J_{u}-J^{-1}J_{v} J^{-1}J_{v}\right).
\end{align*}
Again we expand this using \eqref{eq:Jaroundnut} to find
\begin{align*} 
\partial_{\Theta}\left(\log\left(\left(u^{2}+v^{2}\right)\erm^{2\nu}\right)\right) & =-\frac{u}{v}+\frac{v}{u}
-\frac{v}{u}\left(U_{0}\left(X_{0}W_{0}-v^{2}Z_{0}^{2}\right)\right)^{2}\\ 
& \hspace{0.4cm}+\frac{u}{v}\left(W_{0}\left(X_{0}U_{0}-u^{2}Y_{0}^{2}\right)\right)^2+\mathcal{O}(u)+\mathcal{O}(v) \\ 
& = \mathcal{O}(u)+\mathcal{O}(v)=\mathcal{O}(R).
\end{align*}
Now the jump in $\log\left(u^{2}+v^{2}\right)\erm^{2\nu}$ round the nut is 
\begin{equation*} 
\Delta\left(\log\left(\left(u^{2}+v^{2}\right)\erm^{2\nu}\right)\right)=\lim_{R\rightarrow 0}\int_{0}^{\frac{π}{2}}\partial_{\Theta}\left(\log\left(\left(u^{2}
+v^{2}\right)\erm^{2\nu}\right)\right)\drm \Theta=0,
\end{equation*}
and $\left(u^{2}+v^{2}\right)\erm^{2\nu}$ does not jump either.

On $u=0$, $U_{0}$ is continuous and by Proposition~\ref{prop:conf1} $\frac{U_0}{v^2\erm^{2\nu}}=\text{constant}$, so $v^2\erm^{2\nu}$ must be bounded there. 
Similarly, on $v=0$ for $W_{0}$ and $\frac{W_{0}}{u^{2}\erm^{2\nu}}$. Thus, $\left(u^{2}+v^{2}\right)\erm^{2\nu}$ is continuous on the  two rods and has no 
jump across the nut, so it is continuous on the axis.
\end{proof}

The strategy for removing conical sigularities is now clear: We start by assuming that $\phi$ and $\psi$ both have period $2\pi$. On the part of the axis 
extending to $z=+\infty$, where the Killing vector $\partial_{\phi}$ vanishes, we have $ \frac{U_{0}}{v^{2}\erm^{2\nu}}=\text{constant}$ by 
Proposition~\ref{prop:conf1} and the asymptotic conditions we are imposing make this constant one. The corresponding statement holds on the part of 
the axis extending to $z=-\infty$ for the same reason. When passing by a nut between two space-like rods we may suppose, by choosing the basis of 
Killing vectors appropriately, that $\partial_{\phi}$ vanishes above the nut and $\partial_{\psi}$ below and we know by Proposition~\ref{prop:conf2} 
that $\left(u^2+v^2\right)e^{2\nu}$ is continuous at the nut. If there is no conical singularity above the nut we have $ \frac{U_{0}}{v^{2}\erm^{2\nu}}=1$ 
there and we want $ \frac{W_{0}}{u^{2}\erm^{2\nu}}=1$ below the nut. Therefore we require the limits of $U_{0}$ from above and $W_{0}$ from below to be equal.
\begin{cor} \label{cor:conic}
With the conventions leading to \eqref{eq:Jaroundnut}, the absence of conical singularities requires
\begin{equation*}
\lim_{v\rightarrow 0}U_{0}=\lim_{u\rightarrow 0}W_0.
\end{equation*}
\end{cor}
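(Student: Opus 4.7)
The plan is to read off both sides of the claimed equality as limits of the single quantity $(u^{2}+v^{2})\erm^{2\nu}$ evaluated on each of the two spacelike rods meeting at the nut, and then invoke Proposition~\ref{prop:conf2} for continuity across the nut.

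First, on the rod $u=0$ above the nut, where $\partial_{\phi}$ vanishes, Proposition~\ref{prop:conf1} gives $U_{0}/(v^{2}\erm^{2\nu})=\text{const}$. The absence of a conical singularity along this rod, together with the convention that $\phi$ has period $2\pi$ (as set up just before Theorem~\ref{thm:Jaroundnut}), pins this constant to $1$, so that $v^{2}\erm^{2\nu}=U_{0}$ on $u=0$. An identical argument on the rod $v=0$ below the nut, with $\psi$ of period $2\pi$, gives $u^{2}\erm^{2\nu}=W_{0}$ there.

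Second, I would take the limit on to the nut along each of the two rods separately. Along $u=0$, as $v\downarrow 0$, the regular part of the combined conformal factor is
\begin{equation*}
\lim_{v\to 0}\bigl(u^{2}+v^{2}\bigr)\erm^{2\nu}\big|_{u=0}=\lim_{v\to 0}v^{2}\erm^{2\nu}=\lim_{v\to 0}U_{0},
\end{equation*}
while along $v=0$, as $u\downarrow 0$,
\begin{equation*}
\lim_{u\to 0}\bigl(u^{2}+v^{2}\bigr)\erm^{2\nu}\big|_{v=0}=\lim_{u\to 0}u^{2}\erm^{2\nu}=\lim_{u\to 0}W_{0}.
\end{equation*}

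Third, Proposition~\ref{prop:conf2} asserts that the function $(u^{2}+v^{2})\erm^{2\nu}$, viewed as a function on the axis $\{u=0\}\cup\{v=0\}$, is continuous at the nut. The two one-sided limits computed above must therefore agree, giving
\begin{equation*}
\lim_{v\to 0}U_{0}=\lim_{u\to 0}W_{0},
\end{equation*}
which is precisely the asserted condition. The only mildly delicate point — and the main thing to check carefully — is that the normalisation constants in Proposition~\ref{prop:conf1} are indeed both $1$ on the two rods meeting at our nut; this is guaranteed because we are imposing regularity (no conical defect) on both adjacent segments, having already used the asymptotic normalisation at infinity to fix the period convention.
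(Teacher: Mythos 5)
Your proof is correct and follows essentially the same route as the paper: Proposition~\ref{prop:conf1} plus the no-conical-defect normalisation (constant equal to $1$, via the period-$2\pi$ convention of Theorem~\ref{thm:Jaroundnut}) identifies $(u^2+v^2)\erm^{2\nu}$ with $U_0$ on one rod and $W_0$ on the other, and Proposition~\ref{prop:conf2} then equates the two one-sided limits at the nut. The only presentational difference is that the paper propagates the normalisation inward from the asymptotic rods, whereas you impose it directly on the two rods adjacent to the nut; the content is the same.
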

This is what we have just shown. At a nut where one rod is the horizon we do not obtain further conditions as we have no reason to favour a particular 
value of $ω$. To see how this is applied to the case of the black ring we need a better understanding of going past a nut.

\subsection{Local Behaviour of $P$ around a Nut: Switching}

In this section we establish a prescription for obtaining the matrix $P_{-}$ adapted to the segment of the axis below a nut from the matrix $P_{+}$ adapted 
to the segment above. We call this process `switching'. Once we have the prescription we can impose the condition of non-conicality found in Corollary~\ref{cor:conic}. 
We then apply this to the black ring, but it is clear that with this prescription we have an algorithm for working systematically down the axis given any rod structure 
so that we obtain all the matrices $P_{i}$ adapted to the different rods labelled by $i$. The result is the following.
\begin{thm} \label{thm:switching}
Let $z=a$ be a nut where two spacelike rods meet, as in Figure~\ref{fig:sprods}, and assume that we have chosen a gauge where the twist potentials vanish when 
approaching the nut. Then 
\begin{equation*} 
\renewcommand{\arraystretch}{1.5}
P_{-}^{\vphantom{\frac{1}{2}}}=\left(\begin{array}{ccc}0 & 0 & \dfrac{1}{2(z-a)} \\0 & 1 & 0 \\ 2(z-a) & 0 & 0\end{array}\right)P_{+}^{\vphantom{\frac{1}{2}}}\left(\begin{array}{ccc}0 & 0 & 2(z-a) \\0 & 1 & 0 \\\dfrac{1}{2(z-a)} & 0 & 0\end{array}\right),
\end{equation*}
where $P_{+}$ is adapted to $u=0$ and $P_{-}$ is adapted to $v=0$.
\end{thm}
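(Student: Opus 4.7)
The plan is to reduce the matrix identity to a statement on the axis and then invoke analytic continuation. We would work in the $(u,v)$-coordinates of Theorem~\ref{thm:Jaroundnut} centred at the nut $z=a$, so that $2(z-a)=v^{2}-u^{2}$, with the top rod $u=0$ corresponding to $v^{2}=2(z-a)$ and the bottom rod $v=0$ to $u^{2}=-2(z-a)$. In these coordinates $J$ takes the local form \eqref{eq:Jaroundnut}, whose entries $X_{0},Y_{0},Z_{0},U_{0},V_{0},W_{0}$ are analytic in $(u^{2},v^{2})$ near the nut.

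Both $P_{-}$ and $AP_{+}A^{T}$ are symmetric meromorphic functions of $z$ with at most simple poles at $z=a$, so it suffices to verify the identity on a real segment. We would write out $J'_{\pm}$ using the Ernst formula of Summary~\ref{thm:summary}. For $P_{+}$ on $u=0$ we cancel the $\partial_{\varphi}$ row/column, making $\tilde A_{+}$ the $(t,\psi)$-block with $\det\tilde A_{+}=v^{2}(X_{0}W_{0}-v^{2}Z_{0}^{2})=2(z-a)X_{0}W_{0}+O((z-a)^{2})$. For $P_{-}$ on $v=0$ we cancel the $\partial_{\psi}$ row/column, making $\tilde A_{-}$ the $(t,\varphi)$-block with $\det\tilde A_{-}=u^{2}(X_{0}U_{0}-u^{2}Y_{0}^{2})=-2(z-a)X_{0}U_{0}+O((z-a)^{2})$. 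The $\sigma$-model condition $\det J=-r^{2}=-u^{2}v^{2}$, expanded with \eqref{eq:Jaroundnut}, forces $X_{0}U_{0}W_{0}=-1$ at the nut, which is precisely the relation needed for the leading poles to agree: $p_{-,11}\sim -1/[2(z-a)X_{0}U_{0}]$ and $p_{+,33}/[2(z-a)]^{2}\sim W_{0}/[2(z-a)]$ match iff $X_{0}U_{0}W_{0}=-1$, and the $(3,3)$ entries agree by the symmetric argument.

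The gauge assumption that the twist potentials vanish at $z=a$ means the $\chi$-dependent off-diagonal entries of $J'_{\pm}$ start at $O(z-a)$, so at leading order the middle block of $AP_{+}A^{T}$ reduces to the $\tilde A_{-}$-contributions. Entry-by-entry matching of the remaining components would then follow from the structural relations linking $Y_{0},Z_{0},V_{0}$ to derivatives of $X_{0},U_{0},W_{0}$ that are forced by the $\sigma$-model equations on $(r,z)$. Once the identity is checked on a segment of the axis, analytic continuation upgrades it to an equality of meromorphic functions of $z$.

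The main obstacle is controlling the twist potential terms beyond leading order. The potentials $\chi_{1},\chi_{2}$ appearing in $J'_{+}$ differ from the $\tilde\chi_{1},\tilde\chi_{2}$ appearing in $J'_{-}$, and their expansions near the nut are obtained by integrating off-axis expressions governed by first-order ODEs; one must show that conjugation by $A$ converts the $\chi\chi^{T}$-correction of $J'_{+}$ exactly into that of $J'_{-}$. A cleaner alternative would be to bypass explicit expansion and argue more conceptually: $AP_{+}A^{T}$ is a symmetric meromorphic matrix with the correct axis-regular pole structure at $z=a$ and defines the same bundle $E\to\mathcal{R}$ as $P_{+}$ but in a trivialization that swaps the first and third basis vectors at the nut, with the scaling factors $2(z-a)$ and $1/[2(z-a)]$ compensating for the relative $(z-a)$-weighting of the two candidates for $\tilde A$. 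Summary~\ref{thm:summary} then identifies it uniquely with $P_{-}$.
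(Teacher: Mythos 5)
There is a genuine gap. Your first route---entry-by-entry matching of $J'_{\pm}$ on the axis---is only carried out to leading order in $(z-a)$, which is precisely the consistency check the paper performs as \emph{motivation} before its proof (the matrices \eqref{eq:Pplus}, \eqref{eq:Pminus} together with $X_0U_0W_0=-1$, after which the paper explicitly says the switch is verified only ``to leading order in $z$''). To promote this to an identity of meromorphic functions you would need to verify it on a whole segment of the axis, which requires the full expansions of both Ernst potentials there, including all the twist-potential corrections you yourself flag as ``the main obstacle''; these are not available in closed form for a general solution, so the analytic-continuation step has nothing to continue. The ``structural relations linking $Y_0,Z_0,V_0$ to derivatives of $X_0,U_0,W_0$'' are never identified, and that is exactly where the content of the theorem lies.

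Your ``cleaner alternative'' is the right idea but is asserted rather than proved, and the paper's proof is precisely the missing argument. After permuting rows and columns so that both $P_+$ and $P_-$ split to $J$ in the basis $(\partial_t,\partial_\varphi,\partial_\psi)$, the prescription becomes conjugation by $D=\mathrm{diag}\big(1,\,2(z-a),\,1/(2(z-a))\big)$; combining $D$ with the diagonal factors $r/\zeta$ and $-r\zeta$ that enter the splitting (inserted in the slot of whichever Killing vector vanishes on each rod) gives $\widehat{P_-}=A\widehat{P_+}B$ with $A,B$ diagonal, built from $2(w-a)\zeta/r$ and $2(w-a)/(r\zeta)$ where $w$ is the spectral coordinate. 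The key computation---absent from your proposal, which never engages the spectral parameter---is the factorization $w-a=\tfrac{1}{2\zeta}(u\zeta+v)(u-v\zeta)$, which shows that $A$ is holomorphic and nonsingular near $\zeta=0$ with $A|_{\zeta=0}=\mathrm{id}$ and $B$ is holomorphic and nonsingular near $\zeta^{-1}=0$ with $B|_{\zeta^{-1}=0}=\mathrm{id}$. Hence any splitting of $\widehat{P_+}$ induces one of $\widehat{P_-}$ yielding the same $J(r,z)$ at $\zeta=0$, and the one-to-one correspondence of Summary~\ref{thm:summary} identifies the conjugated matrix with the patching matrix adapted to the adjacent rod. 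Without this computation (or some equivalent verification that the $z$-dependent conjugation respects the bundle over $\mathcal{R}$), the statement that the factors $2(z-a)$ and $1/(2(z-a))$ ``compensate for the relative weighting'' is exactly what needs to be proved, not a proof.
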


We begin by motivating this prescription from a consideration of \eqref{eq:GCmetric}. First calculate the twist potentials in the same way as in Section~\ref{sec:Pexamples}. 
The metric \eqref{eq:GCmetric} can be rearranged in orthonormal form
\begin{equation*}
\begin{split}
\drm s^{2} & = X (\drm t +{\omega}_{1}\, \drm {\varphi} +{\omega}_{2}\,\drm {\psi})^{2} + \skew{2}{\tilde} U (\drm {\varphi}+ {\Omega} \, \drm {\psi})^{2} \\
& \hphantom{=} + \tilde W\, \drm ψ^{2} - \erm^{2{\nu}} (\drm r^{2}+\drm z^{2}).
\end{split}
\end{equation*}
The orthonormal frame is again
\begin{equation*}
\renewcommand{\arraystretch}{2}
\begin{array}{ll}{\theta}^{0} = X^{\frac{1}{2}}(\drm t + {\omega}_{1}\, \drm {\varphi} + {\omega}_{2} \, \drm {\psi}), & {\theta}^{1} 
=\skew{2}{\tilde} U ^{\frac{1}{2}} (\drm {\varphi}+{\Omega} \, \drm {\psi}), \\{\theta}^{2} = \tilde W^{\frac{1}{2}} \, \drm {\psi}, \hspace{0.6cm} {\theta}^{3} = 
\erm^{{\nu}} \, \drm r, & {\theta}^{4} = \erm^{{\nu}} \, \drm z,\end{array}
\end{equation*}
so
\begin{equation*}
\renewcommand{\arraystretch}{2}
\begin{array}{lr}X{\omega}_{1} = Y, \qquad X{\omega}_{2} = Z &  X{\omega}_{1}{\omega}_{2}+\skew{2}{\tilde}U {\Omega} = V, \\ \skew{2}{\tilde}U +X {\omega}_{1}^{2} 
= U, & \tilde W + \skew{2}{\tilde}U {\Omega}^{2}+X{\omega}_{2}^{2}= W.\end{array}
\end{equation*}
Adapted to $\partial_{\psi}=0$, then for small $r$ it is $Z,V,W\in \mathcal{O}(r^{2})$, hence ${\omega}_{2}, {\Omega}, \tilde W \in \mathcal{O}(r^{2})$, 
(In order to see that ${\Omega}\in \mathcal{O}(r^{2})$, derive from $X\in \mathcal{O}(1)$ and $\skew{2}{\tilde}U X = UX-Y^{2}$ that $\skew{2}{\tilde}U\in \mathcal{O}(1)$.) 
and the other terms $\mathcal{O}(1)$. This implies $\frac{\tilde W}{W} \to  1$ as $r→0$. 

Now the 1-forms are
\begin{equation*}
\partial _{t} \to  T = X^{\frac{1}{2}}{\theta}^{0}, \quad \partial _{{\varphi}} \to  {\Phi} = {\omega}_{1}X^{\frac{1}{2}}{\theta}^{0}+\skew{2}{\tilde}U^{\frac{1}{2}}{\theta}^{1},
\end{equation*}
hence
\begin{align*} 
\drm {\chi}_{1} & = * (T\wedge {\Phi}\wedge \drm T) = *(X^{\frac{1}{2}}{\theta}^{0}\wedge \skew{2}{\tilde}U^{\frac{1}{2}}{\theta}^{1}\wedge X\, \drm {\omega}_{2}\wedge \drm {\psi}) \\ 
\drm {\chi}_{2} & = *(T\wedge {\Phi}\wedge  \drm {\Phi}) = *(X^{\frac{1}{2}}{\theta}^{0}\wedge \skew{2}{\tilde}U^{\frac{1}{2}}{\theta}^{1}\wedge ({\omega}_{1}X\, \drm {\omega}_{2}+\skew{2}{\tilde}U\, \drm {\Omega})\wedge \drm {\psi}),
\end{align*}
which with $*({\theta}^{0}\wedge {\theta}^{1}\wedge {\theta}^{2}\wedge {\theta}^{3})={\epsilon}{\theta}^{4}$ leads to
\begin{align*}
\partial _{z} {\chi}_{1} & = {\epsilon} (X\skew{2}{\tilde}U)^{\frac{1}{2}}X \lim_{r\to 0}\left(\frac{\partial _{r}{\omega}_{2}}{\tilde W ^{\scriptscriptstyle{1/2}}_{\vphantom{0}}}\right) \\
\partial _{z} {\chi}_{2} & = {\epsilon} (X\skew{2}{\tilde}U)^{\frac{1}{2}} \lim_{r\to 0}\left(\frac{X{\omega}_{1}\partial _{r}{\omega}_{2}+\skew{2}{\tilde}U\partial _{r}{\Omega}}{W^{\scriptscriptstyle{1/2}}_{\vphantom{0}}}\right).
\end{align*}
We switch again to $(u,v)$-coordinates then on $v=0$ it is
\begin{equation*}
\frac{\partial {\chi}_{1}}{\partial u} = u \frac{\partial {\psi}_{1}}{\partial z} = u {\epsilon} (X\skew{2}{\tilde}U)^{\frac{1}{2}} X \lim_{v\to 0} \left(\frac{1}{u\tilde W^{\scriptscriptstyle{1/2}}_{\vphantom{0}}}\frac{\partial {\omega}_{2}}{\partial v}\right).
\end{equation*}
Now use 
\begin{equation*}
{\omega}_{2} = \frac{Z}{X} = \frac{v^{2}Z_{0}}{X_{0}} \quad \text{and} \quad \tilde W = v^{2}W_{0} + \mathcal{O}(v^{4}), 
\end{equation*}
to obtain
\begin{align*}
\frac{\partial {\chi}_{1}}{\partial u} & = 2{\epsilon}Z_{0}\left(\frac{U_{0}X_{0}}{W_{0}}\right)^{\frac{1}{2}}u+\mathcal{O}(u^{2})\\
\Rightarrow  {\chi}_{1}^{\vphantom{1}} & = {\chi}_{1}^{0} + {\chi}_{1}^{1} u^{2} + \text{h.o.} \vphantom{\left(\frac{U_{0}X_{0}}{W_{0}}\right)^{\frac{1}{2}}}
\end{align*}
Analogous steps lead to 
\begin{align*}
\frac{\partial {\chi}_{2}}{\partial u} & = 2{\epsilon} V_{0} \left(\frac{U_{0}X_{0}}{W_{0}}\right)^{\frac{1}{2}}u^{3}+ \text{h.o.} \\
\Rightarrow  {\chi}_{2}^{\vphantom{1}} & = {\chi}_{2}^{0} + {\chi}_{2}^{1} u^{4} + \text{h.o.} \vphantom{\left(\frac{U_{0}X_{0}}{W_{0}}\right)^{\frac{1}{2}}}
\end{align*}
For $u=0$ we only have to swap $Y\leftrightarrow Z$, $U\leftrightarrow W$. With $u^{2} \sim 2z$ the above can be summarized as 
\begin{equation*} 
\renewcommand{\arraystretch}{2}
P_{-}^{\vphantom{\frac{1}{2}}}(r=0,z) =\left(\begin{array}{ccr}\dfrac{g_{0}}{2z} + \mathcal{O}(1) & -g_{0}^{\vphantom{1}} {\chi}^{1}_{1}+\mathcal{O}(z) & -g_{0}^{\vphantom{1}} {\chi}^{1}_{2}z+\mathcal{O}(z^{2}) \\
\cdot  & \hspace{0.5cm}X_{0}+\mathcal{O}(z) & 2zY_{0}+\mathcal{O}(z^{3}) \\
\cdot  & \cdot  & 2zU_{0}+\mathcal{O}(z^{4})\end{array}\right),
\end{equation*}
where $g_{0}^{\vphantom{2}}=(X_{0}^{\vphantom{2}}U_{0}^{\vphantom{2}}-2zY_{0}^{2})^{-1}$. Note that here we dropped without loss of generality the constant terms of the 
twist potentials ${\chi}_{i}^{0}$. This can be done just by a gauge transformation to $P$ of the form $P\to APB$ with constant matrices $A$ and $B$, namely
\begin{equation*}
\renewcommand{\arraystretch}{1.4}
P \to  \left(\begin{array}{rrr}1 & \hphantom{-}0 & \hphantom{-}0 \\-c_{1} & 1 & 0 \\-c_{2} & 0 & 1\end{array}\right)\, P\, \left(\begin{array}{ccc}1 & -c_{1} & -c_{2} \\ 0 & \hphantom{-}1 & \hphantom{-}0 \\0 & \hphantom{-}0 & \hphantom{-}1\end{array}\right).
\end{equation*}
For $P$ in standard form this results in ${\chi}_{i}\to {\chi}_{i}+c_{i}$. Removing the constant term in the twist potentials allows us to assume that without loss of 
generality the entries which become zero or blow up towards a nut are only on the diagonal. The off-diagonal entries are bounded towards the nut.

Without loss of generality assume that the nut is at $a=0$. Then the calculations above show that to leading order in $z$ the patching matrices below and above the nut are 
(Chosen the right orientation for the basis such that the signs which are recorded by $\epsilon$ work out.)
\begin{equation} \label{eq:Pplus} 
\renewcommand{\arraystretch}{1.5}
P_{-}^{\vphantom{\frac{1}{2}}}=
\left(\begin{array}{ccc}
\hphantom{-}\dfrac{1}{2zX_{0}U_{0}} & -\dfrac{Z_{0}}{(U_{0}X_{0}W_{0})^{\scriptscriptstyle{1/2}}_{\vphantom{0}}} & -\dfrac{V_{0} z}{(U_{0}X_{0}W_{0})^{\scriptscriptstyle{1/2}}_{\vphantom{0}}} \\
\cdot & X_{0} & 2zY_{0} \\
\cdot & \cdot & 2zU_{0}
\end{array} \right),
\end{equation}
\begin{equation} \label{eq:Pminus} 
\renewcommand{\arraystretch}{1.5}
P_{+}^{\vphantom{\frac{1}{2}}}=
\left(\begin{array}{ccc}
-\dfrac{1}{2zX_{0}W_{0}} & \hphantom{-}\dfrac{Y_{0}}{(U_{0}X_{0}W_{0})^{\scriptscriptstyle{1/2}}_{\vphantom{0}}} & -\dfrac{V_{0} z}{(U_{0}X_{0}W_{0})^{\scriptscriptstyle{1/2}}_{\vphantom{0}}} \\
\cdot & X_{0} & -2zZ_{0} \\
\cdot & \cdot & -2zW_{0}
\end{array} \right).
\end{equation}
Using that $\det J = -u^{2}v^{2}$ in \eqref{eq:Jaroundnut} and thus $X_{0}U_{0}W_{0}=-1$ to leading order in $z$ we see that the switching is correct to leading order 
in $z$. (This is consistent with the different adaptations we calculated for example for the Schwarzschild space-time or flat space, see Section~\ref{subsec:PMP}.)

\begin{proof}[Proof of Theorem~\ref{thm:switching}]
To prove Theorem~\ref{thm:switching} the strategy is to follow the splitting procedure outlined in \cite[Sec.~8.4]{Metzner:2012aa}.

We first observe that splitting $P_{+}$ as in \eqref{eq:Pplus} will lead not to $J(r,z)$ as desired, but to $J(r,z)$ with its rows and columns permuted. This can be 
seen by looking at the diagonal case. To obtain $J(r,z)$ with the rows and columns in the order $(t,\phi,\psi)$ we need to permute
\begin{equation*}
P_{+}\rightarrow \widetilde{P_{+}}=E_{1}P_{+}E_{1} \quad \text{with }
E_{1} = \left(\begin{array}{ccc}
0 & 1 & 0 \\
1 & 0 & 0 \\
0 & 0 & 1
\end{array} \right).
\end{equation*}
Similarly for $P_{-}$ by \eqref{eq:Pminus}, we permute
\begin{equation*}
 P_{-}\rightarrow\widetilde{P_{-}}=E_{2}P_{-}E_{2}^{\mathrm{t}} \quad \text{with }
E_{2} = \left(\begin{array}{ccc}
0 & 1 & 0 \\
0 & 0 & 1 \\
1 & 0 & 0
\end{array} \right).
\end{equation*}
 Note that now the prescription in Theorem~\ref{thm:switching} translates to
\begin{equation} \label{eq:tPplus}
\renewcommand{\arraystretch}{1.5}
\widetilde{P_{-}}=D\widetilde{P_{+}}D \quad \text{with }
D = \left(\begin{array}{ccc}
1 & 0 & 0 \\
0 & 2z & 0 \\
0 & 0 & \dfrac{1}{2z}
\end{array} \right).
\end{equation}
Recall that we have set $a=0$. Following the splitting procedure, to obtain $J$ we split the matrices 
\begin{equation} \label{eq:Pswitch1}
\renewcommand{\arraystretch}{1.5}
\begin{split}
\widehat{P_{+}} = 
\left(\begin{array}{ccc}
1 & 0 & 0 \\
0 & \dfrac{r}{ζ} & 0 \\
0 & 0 & 1
\end{array} \right)
\widetilde{P_{+}}
\left(\begin{array}{ccc}
1 & 0 & 0 \\
0 & -rζ & 0 \\
0 & 0 & 1
\end{array} \right),\\
\renewcommand{\arraystretch}{1.5}
\widehat{P_{-}} = 
\left(\begin{array}{ccc}
1 & 0 & 0 \\
0 & 1 & 0 \\
0 & 0 & \dfrac{r}{ζ}
\end{array} \right)
\widetilde{P_{-}}
\left(\begin{array}{ccl}
1 & 0 & \hphantom{l}0 \\
0 & 1 & \hphantom{l}0 \\
0 & 0 & -rζ
\end{array} \right).
\end{split}
\end{equation}
The location of the diagonal entries which are not one is dictated by the position of the Killing vector which vanishes on the section of axis under consideration 
within the basis of Killing vectors $(\partial_{t},\partial_{\phi},\partial_{\psi})$ . In the language of Section~5.3 in Part~I, the integers $(p_{0},p_{1},p_{2})$ 
are, as we know, a permutation of $(0,0,1)$ and the location of the 1 is determined by the prescription just given.

Assembling \eqref{eq:tPplus} and \eqref{eq:Pswitch1} to
\begin{equation*}
\widehat{P_{-}}=A\widehat{P_{+}}B,
\end{equation*}
where 
\begin{equation*}
\renewcommand{\arraystretch}{1.5}
A = 
\left(\begin{array}{ccc}
1 & 0 & 0 \\
0 & \dfrac{2zζ}{r} & 0 \\
0 & 0 & \dfrac{r}{2zζ}
\end{array} \right), 
\quad
B = 
\left(\begin{array}{ccc}
1 & 0 & 0 \\
0 & -\dfrac{2z}{rζ} & 0 \\
0 & 0 & -\dfrac{rζ}{2z}
\end{array} \right),
\end{equation*}
all that is needed for completing the proof is to show that splitting the left and right hand side of this last equation yield the same $J$-matrix. To perform the 
splitting, we replace all appearances of $z$ by $w$ and make the substitution as in Eq.~(3.4) in Part~I. Note that
\begin{equation*} 
w=z+\frac{r}{2}\left(\zeta^{-1}-\zeta\right)=\frac{1}{2}\left(u^{2}-v^{2}+uv\left(\zeta^{-1}-\zeta\right)\right)=\frac{1}{2\zeta}\left(u\zeta+v\right)\left(u-v\zeta\right),
\end{equation*}
so that
\begin{align*}
\frac{2w\zeta}{r} & = \frac{(u\zeta+v)(u-v\zeta)}{uv}=1+O\left(\zeta\right), \\
-\frac{2w}{r\zeta} & = -\frac{1}{\zeta^2}\frac{(u\zeta+v)(u-v\zeta)}{uv}=1+O\left(\zeta^{-1}\right).
\end{align*}
Thus $A(z,r,\zeta)$ is holomorphic and nonsingular in the neighbourhood of $\zeta=0$ with $A(z,r,0)=\mathrm{id}$, and $B(z,r,\zeta^{-1})$ is holomorphic and 
nonsingular in the neighbourhood of $\zeta^{-1}=0$ with $B(z,r,0)=\mathrm{id}$. Consequently, if $\widehat{P_{+}^{\vphantom{\infty}}}$ splits as
\begin{equation*}
\widehat{P_{+}^{\vphantom{\infty}}}=K_{+}^{0}\left(r,z,\zeta\right)\left(K_{+}^{\infty}\left(r,z,\zeta^{-1}\right)\right)^{-1},
\end{equation*}
with $K_{+}^{0}$ holomorphic and nonsingular in the neighbourhood of $\zeta=0$ and $K_{+}^{\infty}$ holomorphic and nonsingular in the neighbourhood of $\zeta^{-1}=0$, 
then a splitting of $\widehat{P_{-}^{\vphantom{\infty}}}$ 
is given by taking
\begin{equation*}
\widehat{P_{-}^{\vphantom{\infty}}}=K_{-}^{0}\left(K_{-}^{\infty}\right)^{-1} \quad \text{with } K_{-}^{0}=A K_{+}^{0},\ K_{-}^{\infty}=B^{-1}K_{+}^{\infty}.
\end{equation*}
The corresponding expressions for $J$ are
\begin{equation*}
J=J_{+}^{\vphantom{\infty}}(r,z)=K_{+}^{0}(0)\left(K_+^{\infty}(0)\right)^{-1}
\end{equation*}
and
\begin{equation*} 
J=J_{-}^{\vphantom{\infty}}(r,z)=K_{-}^{0}(0)\left(K_{-}^{\infty}(0)\right)^{-1}=A(r,z,0) J_{+}^{\vphantom{\infty}}(r,z) B(r,z,0)=J_{+}^{\vphantom{\infty}}(r,z).
\end{equation*}
These are the same.
\end{proof}

\subsection{Application to the Black Ring} \label{sec:applbr}
Now we see how to apply the prescription for switching and the discussion of conicality to $P(z)$ for the black ring as in \eqref{eq:BRpatmat}. We are interested in the 
nut with largest $z$-value, which is the one at $z=\beta$. The first step is to make an additive shift to the twist potential $\chi$ to ensure that the term $P_{12}$ in 
\eqref{eq:BRpatmat} is finite at $z=\beta$. This needs
\begin{equation*}
\chi\rightarrow\chi+C, \quad C=-\frac{2\nu}{\beta+\lambda},
\end{equation*}
when
\begin{equation*} 
P_{12}\rightarrow P_{12}-CP_{11}=P_{12}-\frac{\nu(z+\lambda)}{(\beta+\lambda)(z+\alpha)(z-\beta)}=\frac{\nu}{(z+\alpha)(\beta+\lambda)},
\end{equation*}
which is indeed finite at $z=\beta$, and
\begin{equation*} 
P_{22}\rightarrow P_{22}-2CP_{12}+C^2P_{11}=-\frac{(z+\mu)}{(z+\alpha)}, \quad \text{where }\mu= \frac{\kappa^2(2b-c+bc)}{(1+b)},
\end{equation*}
which is also finite at $z=\beta$. We are in position to make the switch as at Theorem~\ref{thm:switching} with $\beta$ in place of $a$ and the result is
\begin{equation*}
\renewcommand{\arraystretch}{2.5}
P_{2}=\left(\begin{array}{ccc}
\dfrac{(z+\alpha)}{2(z-\alpha)(z-\beta)} & 0 & 0 \\
\cdot & -\dfrac{(z+\mu)}{(z+\alpha)} & \dfrac{2\nu(z-\beta)}{\gamma(z+\alpha)} \\
\cdot & \cdot & -2\dfrac{(z+\lambda)(z-\beta)}{(z+\alpha)}
\end{array} \right).
\end{equation*}
We have completed the switching and obtained $P_2(z)$, the transition matrix adapted to the section of axis $\alpha<z<\beta$. We could continue to find the transition 
matrix adapted to the other segments but that is straightforward and we do not need it. Instead we shall return to the question of conicality addressed in 
Corollary~\ref{cor:conic}. Compare with Theorem~\ref{thm:Jaroundnut} to find from $P_1$ that
\begin{equation*}
v^{2}W_{0}= \frac{2(z+\alpha)(z-\beta)}{(z-\alpha)}
\end{equation*}
where now $v^{2}=-2(z-\beta)$ and from $P_2$ that
\begin{equation*}
u^{2}U_{0}=-\frac{2(z+\lambda)(z-\beta)}{(z+\alpha)}
\end{equation*}
where now $u^{2}=2(z-\beta)$. Corollary~\ref{cor:conic} implies that there is no conical singularity on the axis section $\alpha<z<\beta$ provided
\begin{equation*}
\lim_{u\rightarrow 0}W_{0}=\lim_{v\rightarrow 0}U_{0},
\end{equation*}
which here requires
\begin{equation*}
\frac{\beta+\lambda}{\beta+\alpha}=\frac{\beta+\alpha}{\beta-\alpha}.
\end{equation*}
Using \eqref{eq:BRpar} this condition can be solved for $b$ as
\begin{equation*}
b=\frac{2c}{1+c^2}
\end{equation*}
which is known  to be the right condition (\cite{Emparan:2002aa} or \cite[Eq.~(6.20)]{Harmark:2004rm}). 

\section{Summary and Outlook}
In this work we have presented a possible way for the reconstruction of five- or higher-dimensional black hole space-times from what are at the moment believed 
to be the classifying parameters, namely the rod structure and angular momenta. The method is based on a twistor construction which in turn relies on the Penrose-Ward transform. 

Our idea assigns a patching matrix to every rod structure where, apart from the possible poles at the nuts, the entries of the patching matrix have to be rational 
functions with the same denominator $Δ$ --- Section~\ref{sec:converse}. By imposing boundary conditions the aim is to determine all the coefficients of the polynomials 
in the numerator of these rational functions in terms of the nuts, rods and angular momenta.

However, with an increasing number of nuts one needs increasingly sophisticated tools and it is of particular importance to gain a detailed understanding of how the 
patching matrices, adapted to two neighbouring rods, are related. In Theorem~\ref{thm:switching} we show how to do this and \cite[Thm.~6.5]{Metzner:2012aa} provides 
this statement for the nut at infinity, that is, it relates the patching matrices which are adapted to the outer rods. By means of that we are able to reconstruct 
the patching matrix for a general two-nut rod structure and we can show that a 
three-nut rod structure with one Killing vector hypersurface-orthogonal, together with a given angular momentum, fixes the space-time to be the black ring. 

Also in Section~\ref{sec:converse} we discuss conical singularities on the axis and show how to obtain necessary and sufficient conditions for their removal. 
Applying this to the black ring we obtain the known relation between the parameters. In particular, this implies a relation between the rod structure and the 
asymptotic quantities for a non-singular solution known to exist.\\

\noindent Further questions which are interesting to pursue in this context are for example: 

Which rod structures are admissible? In other words, are there restrictions on the rod structures arising in nonsingular solutions? The example of flat space treated here shows that there 
are restrictions.

Can we construct a Lens space-time this way, that is a space-time whose horizon is connected and has the topology of a Lens space \cite[Prop.~2]{Hollands:2008fp}? 
We know what the corresponding rod structure looks like, but are we able to fix enough parameters and can we see whether the resulting patching matrix does give 
rise to a space-time without singularities? The latter question seems to be difficult to address as by the analytic continuation one can guarantee the existence 
of the solution with all its nice regularity properties only in a neighbourhood of the axis, but further away from the axis there might be so-called ``jumping 
lines'', where the mentioned triviality assumption of the bundle does not hold. 

How many dimensions does the moduli space for an $n$\,-nut rod structure have? Can we find upper and lower bounds on that depending on the imposed boundary 
conditions? This also does not seem to be an easy questions as most of the conditions we impose on the patching matrix are highly non-linear, for example the 
determinant condition. 

Which parts of the theory extend to yet higher dimensions? We have already pointed out along the way that some statements straight-forwardly generalize to more 
than five dimensions as well, but some others do not. A closer look at those points would certainly be interesting. 

Also stepping down a dimension leads to a question for which this set of tools might be appropriate. Are we able to disprove the existence of a regular double-Kerr 
solution in four dimensions by these methods? It is conceivable that for example the imposed compatibility requirements as one switches at the nuts lead finally to an 
overdetermined system of conditions and thereby a contradiction.

\hypersetup{
	bookmarksdepth=0
}

\section*{Acknowledement}
We are grateful to Piotr Chru\'{s}ciel and Nicholas Woodhouse for helpful discussions and sharing their thoughts on some ideas. LM is supported by a Leverhulme Fellowship and EPSRC grant EP/J019518/1. NM was supported by a PhD studentship from the German National Academic Foundation (Studienstiftung des deutschen Volkes), a Lamb and Flag Scholarship from St\,John's College Oxford and the EPSRC studentship MATH0809. 

\nocite{Ward:1983yg}
\bibliographystyle{jphysicsB}						
\bibliography{/Users/norman/mathematics/Papers/library} 							
\end{document}